\pdfoutput=1
\documentclass[sigconf,nonacm]{acmart}

\usepackage{booktabs} 

\usepackage{subfig}
\usepackage{algorithm}
\usepackage[noend]{algorithmic}

\usepackage{amsmath}
\usepackage{amsthm}
\theoremstyle{plain}

\newtheorem*{theorem*}{Theorem}
\usepackage{multirow}

\usepackage{natbib}

\fancyhead{}
\usepackage{balance}

\usepackage[inline,nomargin]{fixme}

\DeclareMathOperator{\poly}{poly}

\DeclareMathOperator{\E}{\mathbb{E}}

\renewcommand{\vec}[1]{\boldsymbol{\mathbf{#1}}}

\usepackage{hyperref}


\begin{document}

\title{The Block Point Process Model for Continuous-time Event-based Dynamic Networks}
\author{Ruthwik R. Junuthula}
\affiliation{%
 \institution{University of Toledo}
 \streetaddress{2801 W. Bancroft St.}
 \city{Toledo}
 \state{OH}
 \country{USA}
 \postcode{43606}
}
\email{rjunuth@utoledo.edu}

\author{Maysam Haghdan}
\authornote{Research was conducted while author was at the University of Toledo.}
\affiliation{%
 \institution{ETH Z\"{u}rich}
 \streetaddress{Universit\"{a}tstrasse 25}
 \city{Z\"{u}rich}
 \country{Switzerland}
 \postcode{8092}
}
\email{maysam.haghdan@inf.ethz.ch}

\author{Kevin S. Xu}
\affiliation{%
 \institution{University of Toledo}
 \streetaddress{2801 W. Bancroft St.}
 \city{Toledo}
 \state{OH}
 \country{USA}
 \postcode{43606}
}
\email{kevin.xu@utoledo.edu}

\author{Vijay K. Devabhaktuni}
\affiliation{%
 \institution{University of Toledo}
 \streetaddress{2801 W. Bancroft St.}
 \city{Toledo}
 \state{OH}
 \country{USA}
 \postcode{43606}
}
\email{Vijay.Devabhaktuni@utoledo.edu}

\begin{abstract}
We consider the problem of analyzing timestamped relational events between a set of entities, such as messages between users of an on-line social network. 
Such data are often analyzed using static or discrete-time network models, which discard a significant amount of information by aggregating events over time to form network snapshots. 
In this paper, we introduce a block point process model (BPPM) for 
continuous-time event-based dynamic networks. 
The BPPM is inspired by the well-known stochastic block model (SBM) for static 
networks. 
We show that networks generated by the BPPM follow an SBM in the limit of a 
growing number of nodes. 
We use this property to develop principled and efficient local search and variational 
inference procedures initialized by regularized spectral clustering. 
We fit BPPMs with exponential Hawkes processes to analyze several real network data sets, including a 
Facebook wall post network with over $3,500$ nodes and $130,000$ events.  
\end{abstract}

\begin{CCSXML}
<ccs2012>
<concept>
<concept_id>10002950.10003624.10003633.10003638</concept_id>
<concept_desc>Mathematics of computing~Random graphs</concept_desc>
<concept_significance>500</concept_significance>
</concept>
<concept>
<concept_id>10002950.10003648.10003649</concept_id>
<concept_desc>Mathematics of computing~Probabilistic representations</concept_desc>
<concept_significance>300</concept_significance>
</concept>
<concept>
<concept_id>10002950.10003648.10003670</concept_id>
<concept_desc>Mathematics of computing~Probabilistic reasoning algorithms</concept_desc>
<concept_significance>100</concept_significance>
</concept>
<concept>
<concept_id>10002950.10003648.10003700</concept_id>
<concept_desc>Mathematics of computing~Stochastic processes</concept_desc>
<concept_significance>100</concept_significance>
</concept>
<concept>
<concept_id>10010147.10010257.10010293.10010319</concept_id>
<concept_desc>Computing methodologies~Learning latent representations</concept_desc>
<concept_significance>300</concept_significance>
</concept>
<concept>
<concept_id>10010147.10010257.10010321.10010335</concept_id>
<concept_desc>Computing methodologies~Spectral methods</concept_desc>
<concept_significance>100</concept_significance>
</concept>
</ccs2012>
\end{CCSXML}

\ccsdesc[500]{Mathematics of computing~Random graphs}
\ccsdesc[300]{Mathematics of computing~Probabilistic representations}
\ccsdesc[100]{Mathematics of computing~Probabilistic reasoning algorithms}
\ccsdesc[100]{Mathematics of computing~Stochastic processes}
\ccsdesc[300]{Computing methodologies~Learning latent representations}
\ccsdesc[100]{Computing methodologies~Spectral methods}

\keywords{block Hawkes model; event-based network; continuous-time network; timestamped network; relational events; stochastic block model; Hawkes process; asymptotic independence}

\maketitle

\section{Introduction}

Many application settings involve analysis of timestamped relational event 
data in the form of triplets (sender, receiver, timestamp), as shown 
in Figure \ref{fig:EventTable}. 
Examples include analysis of messages between users of an on-line social 
network, emails between employees of a company, and transactions between 
buyers and sellers on e-commerce websites. 
These types of data can be represented as dynamic networks evolving in 
continuous time due to the fine granularity on the timestamps of 
events and the irregular time intervals at which events occur.

\begin{figure}[t]
\centering
\subfloat[Event table]{
\label{fig:EventTable}
\begin{tabular}[b]{ccc}
\hline
Sender & Receiver & Time \\
\hline
1 & 2 & 0.1 \\
2 & 3 & 0.4 \\
3 & 2 & 0.6 \\
1 & 2 & 1.2 \\
1 & 3 & 1.3 \\
2 & 1 & 1.6 \\
\hline
\end{tabular}
}
\qquad
\subfloat[Adjacency matrix]{
\label{fig:AdjMats}
\begin{minipage}[b]{1.2in}
\begin{align*}
A^{[0,1)} &= 
\begin{bmatrix}
0 & 1 & 0 \\
0 & 0 & 1 \\
0 & 1 & 0
\end{bmatrix} \\
A^{[1,2)} &=
\begin{bmatrix}
0 & 1 & 1 \\
1 & 0 & 0 \\
0 & 0 & 0
\end{bmatrix}
\end{align*}
\vspace{1pt}
\end{minipage}
}
\caption{Two representations of a continuous-time event-based dynamic 
network.
The adjacency matrices discard the exact times and ordering of 
events.}
\label{fig:EventVsAdj}
\end{figure}

Statistically modeling these types of relations and their dynamics over time 
has been of great interest, especially given the ubiquity 
of such data in recent years. 
Most prior work has involved modeling these relations using 
network representations, with nodes representing senders and 
receivers, and edges representing events. 
Such representations often either discard the timestamps 
altogether, which transforms the dynamic network into a static network, or 
aggregate events over time windows to form network snapshots evolving in 
discrete time as in Figure \ref{fig:AdjMats}.
There have been numerous statistical models proposed for static networks 
dating back to the 1960s \citep{Holland1983,goldenberg2010survey}, and more recently, for 
discrete-time networks \citep{xing10,ho11,Yang2011,Xu2014a,Xu2015,Han2014,Corneli2015,Matias2016,rastelli2017exact}, but 
comparatively less attention has been devoted to 
continuous-time networks of timestamped relations.

The development of such continuous-time or point process-based network models \citep{DuBois2010,Blundell2012,Dubois2013,xin2015continuous,
matias2015semiparametric,Fox2016} appears to have progressed separately from recent advances in static and discrete-time network models. 
There have been many recent developments on estimation for static network models such as the stochastic block model (SBM), including the development of consistent estimators such as regularized spectral clustering \cite{qin13,rohe2016}. 
Although some continuous-time models have drawn their inspiration from static network models, to the best of our knowledge, there has not been prior work connecting the two types of models, and in particular, examining whether provably accurate estimators for static network models can be used to estimate continuous-time models. 

In this paper we introduce the block point process model (BPPM) for 
continuous-time event-based 
dynamic networks, inspired by the SBM for static networks. 
Our main contributions are as follows: 
\begin{itemize}
\item We demonstrate an asymptotic equivalence between our proposed BPPM and the SBM in the limit of growing number of nodes, which allows us to use provably accurate and efficient estimators for the SBM, such as regularized spectral clustering, as a starting point to fit BPPMs.

\item We develop efficient local search and variational inference procedures for the BPPM initialized by regularized spectral clustering on an aggregated adjacency matrix.

\item We fit the BPPM to several real network data sets, including a Facebook network with over $3,500$ nodes and $130,000$ events and demonstrate that it is more accurate at predicting future interactions compared to discrete-time SBMs.
\end{itemize}

\section{Background}
\label{sec:Background}

We consider dynamic networks evolving in continuous time through the 
observation of events between pairs of nodes at recorded timestamps, as shown 
in Figure \ref{fig:EventTable}. 
We assume that events are directed, so we refer to the two nodes involved 
in an event as the sender and receiver
(although the model we propose can be trivially 
modified to handle undirected events by reducing the number of parameters). 
Such event data can be represented in the form of a matrix $E$ where each 
row is a 
triplet $\vec{e} = (u,v,t)$ denoting 
an event from node $u$ to node $v$ at timestamp $t$. 
Let $N$ denote the total number of nodes in the network, and let $T$ denote 
the time of the last interaction, so that the interaction times are all in 
$[0,T]$. 

From an event matrix $E$, one can obtain an adjacency matrix 
$A^{[t_1,t_2)}$ over any given time interval $[t_1,t_2)$ such 
that $0 \leq t_1 < t_2 \leq T$. 
To simplify notation, we drop the time interval from the adjacency matrix, 
i.e.~$A = A^{[t_1,t_2)}$.
In this adjacency matrix, $a_{ij} = 1$ if there is at least one 
event from node $i$ to node $j$ in $[t_1,t_2)$, and $a_{ij} = 0$ 
otherwise. 
For example, Figure \ref{fig:AdjMats} shows two adjacency matrices 
constructed by aggregating events from the event table shown in Figure \ref{fig:EventTable} over $[0,1)$ and $[1,2)$. 

\subsection{The Stochastic Block Model}
\label{sec:AdjModels}
Most statistical models for networks 
consider an adjacency matrix rather than event-based representation; many 
commonly used models of this type are discussed in the survey by 
\citet{goldenberg2010survey}. 
One model that has received significant attention 
is the \emph{stochastic block model} (SBM), which is defined as 
follows (adapted from Definition 3 in \citet{Holland1983}):

\begin{definition}[Stochastic block model]
\label{def:SBM}
Let $A$ denote a random adjacency matrix for a static network, and let 
$\vec{c}$ denote a class membership vector. 
$A$ is generated according to a stochastic block model with respect 
to the membership vector $\vec{c}$ if and only if,

\begin{enumerate}
\item For any nodes $i \neq j$, the random variables $a_{ij}$ are 
statistically independent.
\label{item:SBM_Indep}

\item For any nodes $i \neq j$ and $i' \neq j'$, if 
$i$ and $i'$ are in the same class, i.e.~$c_i = c_{i'}$,
and $j$ and $j'$ are in the same class, i.e.~$c_j = c_{j'}$,
then $a_{ij}$ and $a_{i'j'}$ are identically distributed.
\label{item:SBM_Ident}
\end{enumerate}

\end{definition}

The classes in the SBM are also commonly referred to in the literature as 
blocks. 
The class membership vector $\vec{c}$ has $N$ entries where each entry 
$c_i \in \{1, \ldots, K\}$ denotes the class membership of node $i$, 
and $K$ denotes the total number of classes.
Recent work has focused on estimating the class memberships 
from the adjacency matrix $A$. 
In this setting, spectral clustering (and regularized variants) has emerged as an efficient estimator 
that has theoretical accuracy guarantees 
\citep{Rohe2011,Sussman2012,qin13,Lei2015,rohe2016}, 
scales to large networks with thousands of nodes, and is generally not sensitive to 
initialization.

\subsection{Related Work}
\label{sec:PointProcess}

Most existing work on modeling dynamic networks has considered a 
discrete-time representation, where the observations consist of a sequence 
of adjacency matrices. 
This observation model is ideally suited for network data collected at 
regular time intervals, e.g.~weekly surveys. 
In practice, however, dynamic network data is often collected at much finer 
levels of temporal resolution (e.g.~at the level of a second or 
millisecond), in which case 
it likely makes more sense to treat time as continuous rather than discrete. 
In order to apply discrete-time dynamic network models to such data, it 
must first be pre-processed by aggregating events over time windows to form 
network snapshots, and this technique is used in many real data experiments 
\citep{Yang2011,Xu2014a,Han2014,Xu2015,matias2015semiparametric}. 
For example, an aggregated representation of the network in Figure 
\ref{fig:EventTable} with time window of $1$ is shown in Figure 
\ref{fig:AdjMats}.

Aggregating continuous-time network data into discrete-time snapshots 
presents several challenges. 
One would ideally choose the time window to be as short as possible for the 
maximum temporal resolution. 
However, this increases the number of snapshots, and accordingly,  
the computation time (typically linear in the number of snapshots). 
More importantly, models fit using shorter time windows can lead to worse 
predictors than models fit using longer time windows because the models 
often assume short-term memory, such as the Markovian dynamics in many 
discrete-time SBMs \citep{xing10,ho11, Yang2011,Xu2014a,Xu2015,Matias2016}. 
We demonstrate some of these practical challenges in an experiment in 
Section \ref{sec:ExpDisc}.

Another line of research that has evolved independently of discrete-time 
network models involves the use of point processes to 
estimate the structure of a latent network from observations at 
the nodes 
\citep{hall2014tracking,linderman2015scalable,Farajtabar2015,Tran2015,He2015}. 
These models are often used to estimate networks of diffusion from information 
cascades. 
Such work differs from the setting we consider in this paper, where we 
directly observe events \emph{between pairs of nodes} and seek to model the 
dynamics of such event sequences. 

There have been several other models proposed using point processes to model 
continuous-time event-based networks
\citep{DuBois2010,Blundell2012,Dubois2013,Linderman2014,xin2015continuous,
matias2015semiparametric,Fox2016}, which is the setting we consider in this 
paper. 
These models are typically fit using Markov chain Monte Carlo (MCMC) methods, which do not scale to large 
networks with thousands of nodes. 
The BPPM that we propose in this paper is a simpler version of the Hawkes IRM 
\citep{Blundell2012}. 
The relational event model (REM) \citep{Dubois2013} is related to the 
BPPM in that it is also inspired by the SBM and 
shares parameters across nodes in the network in a similar manner. 
We discuss the Hawkes IRM and REM in greater detail and compare them to our 
proposed model in Section \ref{sec:OtherModels}.

\section{The Block Point Process Model}
\label{sec:Model}

\subsection{Model Specification}
\label{sec:ModelSpec}
We propose to model continuous-time dynamic networks using a generative 
point process network model. 
Motivated by the SBM for static networks, 
we propose to divide nodes into $K$ classes or blocks and to associate a 
univariate point process with each pair of node blocks 
$b = (b_1,b_2) \in \{1, \ldots, K\}^2$, which we refer to as a 
\emph{block pair}. 
Let $p=K^2$ denote the total number of block pairs. 
Let $\vec{\pi} = \{\pi_1, \ldots, \pi_K\}$ denote the class membership 
probability vector, where $\pi_q$ denotes the 
probability that a node belongs to class $q$. 
We call our model the \emph{block point process model} (BPPM). 
The generative process for the BPPM for a network of duration $T$ time units 
is shown in Algorithm \ref{alg:GenBPPM}. 

\begin{algorithm}[t]
\caption{Generative process for BPPM}
\label{alg:GenBPPM}

\begin{algorithmic}[1]
\FOR{node $i=1$ \TO $N$}
	\STATE Sample class $c_i$ from categorical distribution with parameter 
		vector $\vec{\pi}$
\ENDFOR
\FOR{block pair $b=1$ \TO $p$}
	\LOOP
		\STATE Sample next event time $t_b$ from $b$th point process
		\label{item:genEvent}
		\IF{$t_b > T$}
			\STATE \textbf{break}
		\ENDIF
		\STATE Randomly select nodes $i \in b_1, j \in b_2$ to form an edge 
			from node $i$ to node $j$ at time $t_b$
		\label{item:selectNodes}
	\ENDLOOP
\ENDFOR
\end{algorithmic}
\end{algorithm}

The BPPM is a very general model---notice that we have not specified what 
type of point process to use in the model (we discuss this in Section 
\ref{sec:PointProChoice}). 
The proposed BPPM is less flexible than existing point process network models such as the Hawkes IRM and the REM (we compare the BPPM to these models in Section \ref{sec:OtherModels}), 
but its simplicity enables theoretical analysis of the model. 
We then use the findings of our analysis to develop principled
and efficient inference procedures that scale to large networks with 
thousands of nodes and hundreds of thousands of events. 
The proposed inference procedures, which we discuss in Section 
\ref{sec:Inference}, take advantage of the close relationship between the 
BPPM and the SBM, which we discuss next.

\subsection{Asymptotic Equivalence with the Stochastic Block Model}
\label{sec:RelationSBM}

The BPPM is motivated by the SBM, where the probability of forming an edge 
between two nodes depends only the classes of the two nodes. 
Given the relation between the point process and adjacency matrix 
representations discussed in Section \ref{sec:Background}, a natural 
question is whether there is any equivalence between the BPPM and the SBM. 
Specifically, does an adjacency matrix $A = A^{[t_1,t_2)}$ constructed from 
an event matrix $E$ generated by the BPPM follow an SBM? 
As far as we know, this connection between point process 
and static network models has not been previously explored in the literature. 

We first note that $A$ meets criterion \ref{item:SBM_Ident} (identical 
distribution within a block pair) in Definition \ref{def:SBM} due to the 
random 
selection of node pair for each event in step \ref{item:selectNodes} of 
Algorithm \ref{alg:GenBPPM}. 
To check criterion \ref{item:SBM_Indep} (independence of all entries of 
$A$), we first note that entries $a_{ij}$ and $a_{i'j'}$ in 
different block pairs, i.e.~$(c_i,c_j) \neq (c_{i'},c_{j}')$, depend on 
different independent point processes (unlike in the Hawkes IRM), so $a_{ij}$ and $a_{i'j'}$ are 
independent. 

Next, consider entries $a_{ij}$ and $a_{i'j'}$ in the same block pair 
$b = (c_i,c_j) = (c_{i'}, c_{j'})$. 
In general, these entries are dependent so 
that criterion \ref{item:SBM_Indep} is not satisfied.\footnote{An 
exception is the case of a homogeneous Poisson process, for which the 
entries are independent by the splitting property.} 
For example, if a Hawkes process \citep{Laub2015} is used in step \ref{item:genEvent} 
of Algorithm \ref{alg:GenBPPM}, then 
$a_{i'j'} = 1$ indicates that at least one event was 
generated in block pair $b$, i.e.~there was at least one jump in the intensity 
of the process. 
This indicates that the probability of 
another event is now higher, so the conditional probability 
$\Pr(a_{ij}=1 | a_{i'j'} = 1)$ should be higher than the marginal 
probability $\Pr(a_{ij}=1)$. 
Thus $a_{ij}$ and $a_{i'j'}$ are 
\emph{dependent}, so $A$ does \emph{not follow} an SBM! 

We denote the \emph{deviation from independence} using the terms $\delta_0$ 
and $\delta_1$ defined by
\begin{align}
\label{eq:delta0}
\delta_0 &= \Pr(a_{ij}=0 | a_{i'j'}=0) - \Pr(a_{ij}=0) \\
\label{eq:delta1}
\delta_1 &= \Pr(a_{ij}=0 | a_{i'j'}=1) - \Pr(a_{ij}=0).
\end{align}
If $\delta_0 = \delta_1 = 0$, then the two adjacency matrix entries are 
independent.   
If $\delta_0 \neq 0$ or $\delta_1 \neq 0$, then the two entries are dependent, 
with smaller values of 
$|\delta_0|,|\delta_1|$ indicating less dependence. 
The following theorem bounds these values.

\begin{theorem*}[Asymptotic Independence Theorem]
\label{thm:AsyIndep}
Consider an adjacency matrix $A$ constructed 
from the BPPM over some time interval $[t_1,t_2)$. 
Then, for any two entries $a_{ij}$ and $a_{i'j'}$ both in block pair 
$b$, the deviation from independence given by $\delta_0, \delta_1$ 
defined in \eqref{eq:delta0}, \eqref{eq:delta1} is bounded in the following manner:
\begin{equation}
\label{eq:deltaBound}
|\delta_0|,|\delta_1| \leq \min\left\{1,\mu_b/n_b\right\}
\end{equation}
where $\mu_b$ denotes the expected number of events in block pair $b$ in 
$[t_1,t_2)$, and $n_b$ denotes the size of block pair $b$. 
In the limit as $n_b \rightarrow \infty$, 
$\delta_0,\delta_1 \rightarrow 0$ provided $\mu_b$ grows at a slower rate 
than $n_b$. 
Thus $a_{ij}$ and $a_{i'j'}$ are asymptotically independent for growing $n_b$. 
\end{theorem*}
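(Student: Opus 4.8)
The plan is to condition on the random number of events generated in block pair $b$ over $[t_1,t_2)$, call it $M_b$, with $\E[M_b]=\mu_b$. By step \ref{item:selectNodes} of Algorithm \ref{alg:GenBPPM}, given $M_b=m$ the $m$ events each select one of the $n_b$ node pairs independently and uniformly, and this selection is independent of the point process that produced the event times; this conditional independence is the structural fact that drives everything. In particular $\Pr(a_{ij}=0\mid M_b=m)=(1-1/n_b)^m$ and $\Pr(a_{ij}=0,a_{i'j'}=0\mid M_b=m)=(1-2/n_b)^m$. A union/Markov bound on the number of events landing on $(i,j)$ gives the marginal estimate $\Pr(a_{ij}=1)\le\E[M_b]/n_b=\mu_b/n_b$, which I abbreviate $p_1:=\Pr(a_{ij}=1)=\Pr(a_{i'j'}=1)\le\mu_b/n_b$ (equal by symmetry).

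First I would reduce both deviations to a single covariance. Writing $\mathrm{Cov}:=\Pr(a_{ij}=1,a_{i'j'}=1)-p_1^2$, a short computation yields
\begin{equation}
\delta_0=\frac{\mathrm{Cov}}{1-p_1},\qquad \delta_1=-\frac{\mathrm{Cov}}{p_1},
\end{equation}
so the two quantities carry opposite signs and it suffices to bound $\mathrm{Cov}$. The factor $\min\{1,\cdot\}$ is automatic, since each $\delta$ is a difference of probabilities and hence lies in $[-1,1]$.

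The easy directions follow from two elementary estimates, $-p_1^2\le\mathrm{Cov}\le p_1(1-p_1)$, the lower one because $\Pr(a_{ij}=1,a_{i'j'}=1)\ge0$ and the upper one because $\Pr(a_{ij}=1,a_{i'j'}=1)\le p_1$. The upper estimate gives $\delta_0\le p_1\le\mu_b/n_b$; the lower estimate, together with a split into the cases $p_1\le\tfrac12$ and $p_1>\tfrac12$ (using $1-p_1\le p_1\le\mu_b/n_b$ in the second), gives $\delta_0\ge-\mu_b/n_b$, so $|\delta_0|\le\min\{1,\mu_b/n_b\}$ in full. The same estimates dispatch the upper side of $\delta_1$: from $\mathrm{Cov}\ge-p_1^2$ we get $\delta_1\le p_1\le\mu_b/n_b$.

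The hard part will be the remaining direction, $\delta_1\ge-\mu_b/n_b$, equivalently $\mathrm{Cov}\le(\mu_b/n_b)\,p_1$. This is precisely the positively-dependent (e.g.\ Hawkes) regime flagged before the theorem, where $a_{i'j'}=1$ signals a larger $M_b$ and thereby raises $\Pr(a_{ij}=1)$. I would attack it through the law of total covariance: with $g(m):=1-(1-1/n_b)^m=\Pr(a_{ij}=1\mid M_b=m)$,
\begin{equation}
\mathrm{Cov}=\E\!\left[\mathrm{Cov}(a_{ij},a_{i'j'}\mid M_b)\right]+\mathrm{Var}\!\left(g(M_b)\right),
\end{equation}
where the conditional covariance is nonpositive (given $M_b$ the two pairs compete for a fixed number of events), so it is enough to control $\mathrm{Var}(g(M_b))$. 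I expect this variance term to be the crux: since $g$ is increasing, bounded by $1$, and satisfies $g(m)\le m/n_b$, one can bound $\mathrm{Var}(g(M_b))$ in terms of the dispersion of the event count $M_b$, and the clean factor $\mu_b/n_b$ emerges when $\mu_b$ is small relative to $n_b$ — that is, in the stated limit $\mu_b=o(n_b)$ with the point process held fixed, so that $g(M_b)$ concentrates near $0$ and its variance is of order $(\mu_b/n_b)\,p_1$. A highly bursty event count (large $\mathrm{Var}(M_b)$ relative to $\mu_b$) is the adversarial case that makes this step delicate, and it is where I would concentrate the quantitative work.
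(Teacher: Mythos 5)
Your covariance reformulation is correct as far as it goes: the identities $\delta_0=\mathrm{Cov}/(1-p_1)$ and $\delta_1=-\mathrm{Cov}/p_1$ check out, and three of the four required bounds follow from them essentially as you say. (One small slip: in the case $p_1>\tfrac12$ of the $\delta_0$ lower bound, the estimate $\mathrm{Cov}\ge -p_1^2$ does not suffice; you need the companion estimate $\mathrm{Cov}\ge -(1-p_1)^2$, which follows from $\Pr(a_{ij}=0,a_{i'j'}=0)\ge 0$ and gives $\delta_0\ge -(1-p_1)\ge -\mu_b/n_b$ — a one-line fix.) These three bounds are the same ones the paper obtains by expanding $\Pr(a_{ij}=0)=\sum_m p(m)\left(\frac{n_b-1}{n_b}\right)^m$ and applying its difference-of-powers lemmas, so up to that point your argument is a cleaner repackaging of the same content. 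The genuine gap is the fourth bound, $\delta_1\ge -\mu_b/n_b$, equivalently $\mathrm{Cov}\le(\mu_b/n_b)\,p_1$, which you explicitly defer to future ``quantitative work'' on $\mathrm{Var}(g(M_b))$. The tools you propose cannot deliver it: the law of total covariance with $g\le 1$ and $g(m)\le m/n_b$ yields at best $\mathrm{Cov}\le\mathrm{Var}(g(M_b))\le\E[g(M_b)^2]\le\E[g(M_b)]=p_1$, i.e.\ only the trivial $\delta_1\ge -1$; nothing produces the extra factor $\mu_b/n_b$ that must survive the division by $p_1$.

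Moreover, nothing can: the missing inequality is false for a general point process, so the bursty regime you flagged as ``delicate'' is in fact a counterexample, not a technical nuisance. Let the block pair have size $n_b=n$ and let the point process emit $M_b=n$ events with probability $\epsilon=1/\sqrt{n}$ and $M_b=0$ otherwise, so that $\mu_b/n_b=\epsilon\to 0$. Conditioning on $a_{i'j'}=1$ forces the bursty branch, hence $\Pr(a_{ij}=0\mid a_{i'j'}=1)\to\frac{e^{-1}-e^{-2}}{1-e^{-1}}=e^{-1}$ while $\Pr(a_{ij}=0)\to 1$, giving $\delta_1\to -(1-e^{-1})\approx -0.63$; this violates both $|\delta_1|\le\min\{1,\mu_b/n_b\}$ and the asymptotic claim, even though $\mu_b=\sqrt{n}$ grows slower than $n_b$. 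Your decomposition actually explains the asymmetry: $\delta_0$ divides $\mathrm{Cov}$ by $1-p_1\approx 1$ and genuinely is $O(\mu_b/n_b)$, whereas $\delta_1$ divides by the possibly tiny $p_1$. You should also know that the paper's own proof of this direction is fallacious at exactly this point: it lower-bounds $\delta_1$ by its numerator $-\Pr(a_{i'j'}=0)\,\delta_0$ after discarding the denominator $\Pr(a_{i'j'}=1)\le 1$, a step valid only when that numerator is nonnegative, i.e.\ $\delta_0\le 0$; two lines later it uses $-\Pr(a_{i'j'}=0)\,\delta_0\ge -\delta_0$, valid only when $\delta_0\ge 0$. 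In the self-exciting regime that motivates the theorem ($\delta_0>0$) the first step reverses, which is precisely what the counterexample above exploits. So your write-up honestly isolates, but does not close, a gap; the paper closes it only by an invalid inequality, and the $\delta_1$ half of the stated bound is simply not true at this level of generality (the $\delta_0$ half, and your treatment of it, stand).
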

The proof of the \nameref{thm:AsyIndep} is provided in Appendix 
\ref{sec:proofDeviation}. 
We evaluate the tightness of the bound in \eqref{eq:deltaBound} via simulation 
in Section \ref{sec:DevExp}. 
Since it depends only on the expected number of events $\mu_b$ and not the 
distribution, it is likely to be loose in general but applies to any choice of point process. 

The \nameref{thm:AsyIndep} states that the deviation given by 
$\delta_0,\delta_1$ is non-zero in 
general for fixed $n_b$, so the entries $a_{ij}$ and $a_{i'j'}$ are dependent, 
but the dependence decreases as the size of a block (and thus, a block pair) 
grows. 
This can be achieved by letting the number of nodes $N$ in the network 
grow while holding the number of classes $K$ fixed. 
In this case, the sizes of block pairs would be growing at rate $O(N^2)$, 
so the asymptotic behavior should be visible for networks with thousands of 
nodes. 
Thus, an adjacency matrix constructed from the BPPM approaches an SBM in 
the limit of a growing network! 
To the best of our knowledge, this is the first such result linking 
networks constructed from point 
process models and static network models. 
It is also practically useful in that it allows us to leverage recent work 
on provably accurate and efficient inference on the SBM for the BPPM.

\subsection{Choice of Point Process Model}
\label{sec:PointProChoice}
Any temporal point process can be used to generate the event times 
in the BPPM. 
We turn our attention to a specific point process: the Hawkes 
process  \citep{Laub2015}, which is a self-exciting process where the 
occurrence of events 
increases the probability of additional events in the future. 
The self-exciting property tends to create clusters of events in time, which 
are empirically observed in many settings.
Prior work has suggested 
that Hawkes processes with exponential kernels provide a good fit to many 
real social network data sets, including email and conversation sequences 
\cite{halpin2013modelling,masuda2013self} and re-shares 
of posts on Twitter \cite{zhao2015seismic}. 
Hence, we also adopt the exponential kernel, which has intensity 
function
\begin{equation*}
\lambda(t) = \lambda^\infty + \sum_{t_i < t} \alpha e^{-\beta (t - t_i)},
\end{equation*}
where $\lambda^\infty$ denotes the background rate that the intensity 
reverts to over time, 
$\alpha$ denotes the jump size for the intensity function, 
$\beta$ denotes the exponential decay rate, and 
the $t_i$'s denote times of events that occurred prior to time $t$. 
We refer to this model as the \emph{block Hawkes model} (BHM).

\subsection{Relation to Other Models}
\label{sec:OtherModels}
The block Hawkes model we consider is a simpler 
version of the Hawkes IRM, which couples a non-parametric Bayesian version of the SBM called the infinite relational model (IRM) \citep{Kemp2006} with mutually-exciting Hawkes processes. 
By utilizing mutually-exciting Hawkes processes, the Hawkes IRM allows for reciprocity of events between block pairs. 
Similar to the BHM, node pairs in a block pair are selected at random to form an edge. 
The authors use MCMC-based inference that scales only to very small networks. 

The BHM simplifies the Hawkes IRM by 
using a fixed number of classes $K$ and univariate rather than multivariate 
Hawkes processes. 
The use of univariate Hawkes processes is crucial because it allows for 
independence between block pairs, which we used in the analysis in Section 
\ref{sec:RelationSBM} to demonstrate an asymptotic equivalence with an SBM. 
We use this asymptotic equivalence 
in Section \ref{sec:Inference} to devise an efficient 
inference procedure that scales to networks with thousands of nodes.

The BHM also has similarities with the relational event model (REM) 
\citep{Dubois2013}, which associates a non-homogeneous Poisson process with each pair of nodes, where the intensity function is piecewise constant with knots (change points) at the event times. 
Different node pairs belonging to the same block pair are governed by the same set of parameters. 
The REM also incorporates other edge formation mechanisms within block pairs such as reciprocity and transitivity, similar to an exponential random graph model. 
The authors also use MCMC for inference.

\section{Inference Procedure}
\label{sec:Inference}
The observed data is in the form of triplets $\vec{e}_s = (u_s, v_s, t_s)$ for each event $s$ denoting the nodes $u_s, v_s$ involved and the timestamp $t_s$. 
Consider an event matrix $E$ where each row corresponds to an event in the form of a triplet $\vec{e}_s$. 
Fitting the BPPM involves estimating both the unknown classes or blocks for each node and the point process parameters $\theta_b$ for each block pair $b$ from $E$. 
In the case of an exponential Hawkes process, the parameters are given by
\begin{equation*}
\theta_b = \left(\alpha_b, \beta_b, \lambda_b^{\infty}\right)
\end{equation*}
for each block pair $b$.
Let $\theta = \{\theta_b\}_{b=1}^p$ denote the set of point process parameters over all $p$ block pairs. 

Exact inference is impractical for all but the smallest networks due to the discrete class memberships $\vec{c}$. 
Thus, we consider two approximate inference methods: a greedy local search (Section \ref{sec:LocalSearch}) and variational inference (Section \ref{sec:Variational}). 
Both approaches are iterative and converge to a local maximum and are thus sensitive to the choice of initialization, which we discuss in Section \ref{sec:Initial}.

\subsection{Local Search}
\label{sec:LocalSearch}
Consider a conditional likelihood function for the point process parameters $\theta$ given the values of the class memberships $\vec{c}$ that determine the block pairs. 
Let $E^{(b)}$ denote rows of $E$ corresponding to events involving block pair 
$b = (b_1, b_2)$; that is, rows $\vec{e}_s$ where $u_s \in b_1$ 
and $v_s \in b_2$. 
The $p$ row blocks $E^{(b)} = [\vec{u}^{(b)}, \vec{v}^{(b)}, \vec{t}^{(b)}]$ form a partition of the rows of matrix $E$. 
Let $m_b$ denote the number of events observed in block pair $b$. 
Let $n_b$ denote the size of block pair $b$, i.e.~the number of node pairs or possible 
edges in block pair $b$, which is given by 
$|b_1|(|b_1|-1)$ if $b_1 = b_2$ and 
$|b_1| |b_2|$ otherwise.
The conditional log-likelihood function is given by
\begin{align}
\log \Pr(E | \theta, \vec{c}) &= \log \prod_{b=1}^p \Pr\left(\vec{u}^{(b)}, \vec{v}^{(b)}, 
	\vec{t}^{(b)} \Big| \theta_b, \vec{c}\right) \nonumber \\
&= \log \prod_{b=1}^p \Pr\left(\vec{t}^{(b)} \big| \theta_b \right) \prod_{s \in b} 
	\Pr(u_s, v_s | \vec{c}) \nonumber \\
&= \log \prod_{b=1}^p \Pr\left(\vec{t}^{(b)} \big| \theta_b \right) \left(\frac{1}{n_b}\right)^{m_b} \nonumber \\
\label{eq:LogLikNoPP}
&= \sum_{b=1}^p \left[\log \Pr\left(\vec{t}^{(b)} \big| \theta_b \right) - m_b 
	\log n_b\right]
\end{align}
where the expression for $\Pr(u_s, v_s | \vec{c})$ follows from the random selection of nodes in 
step \ref{item:selectNodes} of the BPPM generative process. 
The term $\log \Pr\big(\vec{t}^{(b)} | \theta_b \big)$ is simply the log-likelihood of 
the point process model parameters given the timestamps 
of events in block pair $b$. 
For the block Hawkes model, this term can be expressed in the following form \cite{ozaki79,Laub2015}:
\begin{align}
\log \Pr&\left(t_{(1)}, \ldots, t_{(m)} | \alpha_b, \beta_b, \lambda_b^\infty\right) = \sum_{s=1}^{m} \Bigg\{\frac{\alpha_b}{\beta_b} 
	\Big[e^{-\beta_b \left(t_{(m)} - t_{(s)}\right)} - 1\Big] 
 \nonumber \\
\label{eq:HawkesLogLik}
&+ \log\left[\lambda_b^\infty + \alpha_b \sum_{r=1}^{s-1} e^{-\beta_b \left(t_{(s)} - t_{(r)}\right)}\right]\Bigg\} -\lambda_b^\infty t_{(m)},
\end{align}
where
$t_{(s)}$ denotes the $s$th event corresponding to block pair $b$. 
\eqref{eq:HawkesLogLik} can be written in a recursive form as shown in \cite{ogata1981}.

The conditional log-likelihood \eqref{eq:LogLikNoPP} requires knowledge of class memberships $\vec{c}$, which are used to partition the event matrix into 
row blocks $E^{(b)}$ and thus affect both terms in \eqref{eq:LogLikNoPP} 
through $\vec{t}^{(b)}$, $m_b$, and $n_b$. 
However, in practice, class memberships are unknown, so we must maximize \eqref{eq:LogLikNoPP} over all possible class assignments. 

We use a local search (hill climbing) procedure, which is also often referred 
to as label switching or node swapping in the network science literature 
\citep{Karrer2011,Zhao2012} to iteratively update the class 
assignments to reach a local maximum in a greedy fashion. 
Recent work has found that such greedy algorithms are competitive with more 
computationally demanding estimation algorithms in both the static SBM 
\citep{Come2015} and discrete-time dynamic SBM \citep{Xu2014a,Corneli2015} 
while scaling to much larger networks. 
At each iteration, we swap a single node to a different class by choosing 
the swap that increases the log-likelihood the most. 
For each possible swap, we evaluate the log-likelihood by partitioning events 
using to the new class assignments, obtaining the 
maximum-likelihood estimates of the point process model parameters, and 
substituting these estimates along with the 
new class assignments into \eqref{eq:LogLikNoPP}. 
For the block Hawkes model, we maximize \eqref{eq:HawkesLogLik} with respect to 
$(\alpha_b, \beta_b, \lambda_b^\infty)$ for each block using a 
standard interior point optimization routine \cite{byrd2000}.

Each iteration of the local search considers $N(K-1)$ possible swaps. 
Computing the log-likelihood for each swap involves iterating over the 
timestamps of all $M$ events. 
Thus, each iteration of the local search has time complexity $O(KMN)$, 
which is linear in both the number of nodes and events, allowing 
it to scale to large networks. 
We verify this time complexity experimentally in Section \ref{sec:ExpScal}. 
The local search is easily parallelized by evaluating each possible 
swap on a separate CPU core. 
We terminate the local search procedure when no swap is able to increase 
the log-likelihood, indicating that we have reached a local maximum.

\subsection{Variational Inference}
\label{sec:Variational}
Variational inference is commonly used as an optimization-based alternative to MCMC and scales to much larger data sets. 
We implement a mean-field variational inference approach for the BPPM to approximate the intractable posterior distribution by a fully factorizable variational distribution. 
To reduce the Kullback-Leibler (KL) divergence between the true posterior and the mean-field approximation, we derive the evidence lower bound (ELBO) using an approach similar to the derivation in \cite{daudin2008} for a static SBM. 

Let $Z$ denote an $N \times K$ class membership matrix, where the notation $z_{iq} = 1$ is equivalent to $c_i = q$, both denoting that node $i$ is in class $q$. 
We use a $K$-dimensional multinomial distribution for each row $\vec{z}_i = [z_{i1},z_{i2},\ldots,z_{iK}]$ of $Z$ resulting in the following variational distribution:
\begin{equation}
\label{eq:VarDist}
R_E(Z) = \prod_{i=1}^{N} \text{Multinomial}(\vec{z}_i|\vec{\tau}_i),
\end{equation}
where $\vec{\tau}_i$ denotes the variational parameter for node $i$. 
Unlike for a static SBM, we don't have a closed-form update equation for the block Hawkes model, so we optimize the ELBO using coordinate ascent. 
The derivation of the ELBO and the variational expectation-maximization algorithm are provided in Appendix \ref{sec:AppendixInference}.

\subsection{Spectral Clustering Initialization}
\label{sec:Initial}
In order to ensure that the local search or variational inference procedures do not get stuck in poor 
local maxima, it is important to provide a good initialization. 
Methods used to initialize class estimates in static and discrete-time SBMs 
include variants of k-means clustering \citep{Come2015,Matias2016,matias2015semiparametric} and spectral 
clustering \citep{Xu2014a,Xu2015}. 
Variational inference is often executed with multiple random initializations, although some structured approaches have been used successfully in practice for certain models.
Given the close relationship between the proposed BPPM and the SBM discussed 
in Section \ref{sec:RelationSBM}, we use a spectral clustering initialization, which is much faster and more principled than the 
typical approach of multiple random initializations. 
Spectral clustering is an attractive choice because it scales to 
large networks containing thousands of nodes and has
theoretical performance guarantees applicable to the BPPM, as we discuss next. 

Recent work has demonstrated that applying spectral clustering (or a 
regularized variant) to a network generated from an SBM  
results in consistent estimates of class assignments as the number of nodes 
$N \rightarrow \infty$ \citep{Rohe2011,Sussman2012,qin13,Lei2015,rohe2016}. 
These theoretical guarantees typically require the expected degrees of nodes 
to grow polylogarithmically with the number of nodes so that the network is 
not too sparse. 
Networks that satisfy this requirement belong to the polylog degree regime. 
On the other hand, the \nameref{thm:AsyIndep} shows an asymptotic 
equivalence between the BPPM and SBM provided that there are not too many 
events, i.e.~the network is not too dense. 

\begin{algorithm}[t]
\caption{Regularized spectral clustering algorithm used to initialize the 
	local search in the BPPM inference procedure}
\label{alg:SpectralClust}

\begin{algorithmic}[1]
\REQUIRE Adjacency matrix $A$, number of classes $K$, regularization parameter 
$\tau \geq 0$ (Default: $\tau$ = average node degree)
\STATE Compute diagonal matrices $O^{\tau}$ with entries $o_{ii}^\tau = \sum_j a_{ij} + \tau$ and $P^{\tau}$ with entries $p_{jj}^\tau = \sum_i a_{ij} + \tau$
\STATE Compute regularized graph Laplacian
$L = (O^\tau)^{-1/2}A(P^\tau)^{-1/2}$
\STATE Compute singular value decomposition of $L$
\STATE $\tilde{\Sigma} \leftarrow$ diagonal matrix of $K$ largest 
	singular values of $L$
\STATE $(\tilde{U}, \tilde{V}) \leftarrow$ left and right 
	singular vectors for $\tilde{\Sigma}$
\STATE $\tilde{Z} \leftarrow [\tilde{U}, 
	\tilde{V}]$ \COMMENT{concatenate 
	left and right singular vectors}
\STATE Normalize each row of $\tilde{Z}$ to have magnitude of $1$
\STATE $\hat{\vec{c}} \leftarrow$ k-means clustering on rows of 
	$\tilde{Z}$
\RETURN $\hat{\vec{c}}$
\end{algorithmic}
\end{algorithm}

In the polylog degree regime, the ratio
\begin{equation*}
\frac{\mu_b}{n_b} = O\left(\frac{N \poly(\log N)}{N^2}\right) \rightarrow 0 
	\text{ as } N \rightarrow \infty,
\end{equation*}
so the network is not too dense, and the \nameref{thm:AsyIndep} holds. 
Thus, spectral clustering should provide an accurate estimate of the class 
assignments in the polylog degree regime, which is commonly observed in real 
networks such as social networks. 
Since we consider directed relations, we use a regularized spectral clustering 
algorithm for directed networks (pseudocode provided in 
Algorithm \ref{alg:SpectralClust}) to initialize the local search. 
It is a variant of the DI-SIM co-clustering algorithm \citep{rohe2016} 
modified to produce a single set of clusters for directed graphs by concatenating scaled left and right singular vectors in a manner 
similar to \citet{Sussman2012}.
For variational inference, we require an initialization on the variational parameters $\vec{\tau}_i$ rather than a hard clustering solution, so we set $\vec{\tau}_i$ to the $i$th row of $\tilde{Z}$ to represent a soft clustering initialization.

\section{Simulated Data Experiments}

\subsection{Deviation from Independence}
\label{sec:DevExp}
The \nameref{thm:AsyIndep} demonstrates that pairs of adjacency matrix 
entries in the same block pair are dependent, but that the dependence is upper 
bounded by \eqref{eq:deltaBound}, and that the dependence goes to $0$ for 
growing blocks. 
To evaluate the slackness of the bounds, we simulate networks from the block 
Hawkes model (BHM). 
Since $\delta_0$ and $\delta_1$ depend only on the size of the blocks, 
we simulate networks with 
a single block and let the number of nodes $N$ 
grow from $10$ to $1,000$. 
For each number of nodes, we simulate $100,000$ networks from the block 
Hawkes model for a duration of $T = 20$ time units. 
We choose the Hawkes process parameters to be $\alpha = 5N$, $\beta = 10N$, 
and $\lambda^\infty = 0.5N$. 
The expected number of events $\mu = NT = 20N$, which 
grows with $N$ and is slower than the growth of the size $n = N(N-1)$ 
of the block pair, so the \nameref{thm:AsyIndep} applies.

\begin{figure}[t]
\centering
\subfloat{\includegraphics[width=1.6in]{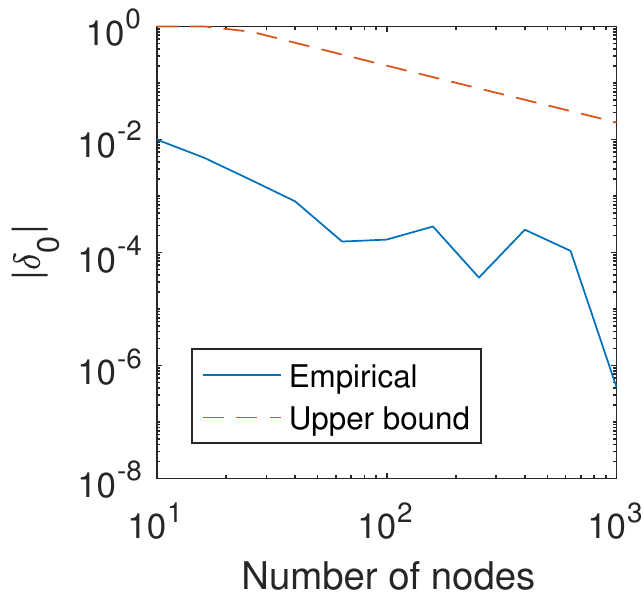}}
\quad
\subfloat{\includegraphics[width=1.6in]{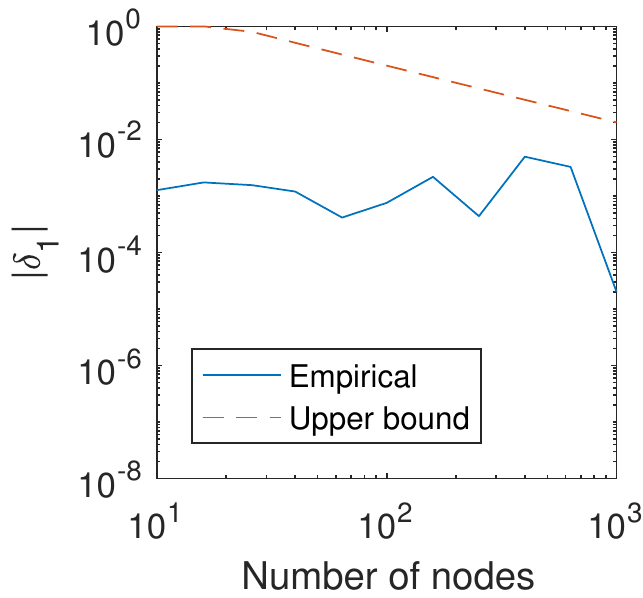}}
\caption{Comparison of empirical deviation from independence with theoretical 
upper bound. 
Empirical deviations are well within the theoretical bound.}
\label{fig:devIndep}
\end{figure}

We evaluate the absolute difference 
between the empirical marginal probability 
$\widehat{\Pr}(a_{ij}=0)$ and the empirical conditional probabilities 
$\widehat{\Pr}(a_{ij}=0 | a_{i'j'}=0)$ and 
$\widehat{\Pr}(a_{ij}=0 | a_{i'j'}=1)$. 
The empirical deviation from independence is shown to be well below the 
upper bound in Figure \ref{fig:devIndep}. 
The bound \eqref{eq:deltaBound} in the \nameref{thm:AsyIndep} depends 
only on the mean 
number of events, so it is somewhat loose when applied to the block Hawkes model. 

\subsection{Class Estimation}
\label{sec:ClassEstSim}

\begin{figure}[t]
\centering
\includegraphics[width = 3.3in]{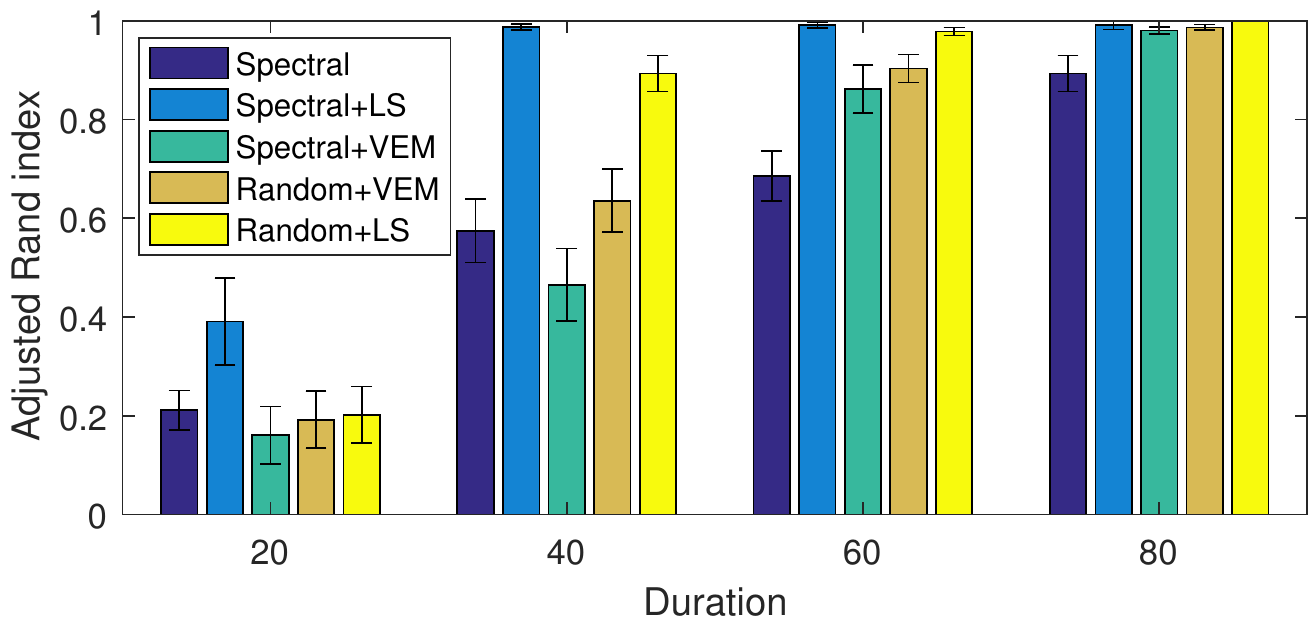}
\caption{Mean adjusted Rand indices ($\pm$ standard error) for class estimation simulation experiment with varying durations. 
Spectral clustering with local search produces the most accurate class estimates, while the spectral initialization for variational EM does not appear to work well.}
\label{fig:SimRandIndex}
\end{figure}

This simulation experiment is based on the synthetic network generator from \citet{Newman2004}, where all the diagonal block pairs have the same parameters, and the off-diagonal block pairs have the same parameters, but different from the diagonal block pairs.
We generate networks with $128$ nodes and $4$ classes from the block Hawkes 
model using Algorithm \ref{alg:GenBPPM} with varying durations from $20$ to 
$80$ time units.
We generate $10$ networks for each duration, with Hawkes process parameters $\alpha$ and $\beta$ being $0.6$ and $0.8$, respectively, for all block pairs. 
The baseline rates $\lambda^\infty$ are $1.8$ for diagonal block pairs and $0.6$ for off-diagonal block pairs. 
Classes are estimated using $5$ methods: spectral clustering, local search initialized with spectral clustering, variational EM initialized with spectral clustering, variational EM with $10$ random initializations, and local search with $10$ random initializations. 

The results shown in Figure \ref{fig:SimRandIndex} demonstrate that local search initialized with spectral clustering (Spectral+LS) is the most accurate. 
Conversely, the spectral clustering initialization for variational EM (Spectral+VEM) results in significantly worse results. 
Variational EM and local search using $10$ random initializations (Random+VEM and Random+LS) take significantly longer than Spectral+LS; however, they are also less accurate, so we use Spectral+LS in the remainder of this paper.

\subsection{Scalability of Local Search}
\label{sec:ExpScal}

\begin{figure}[t]
\centering
\subfloat{
\includegraphics[width = 1.5in]{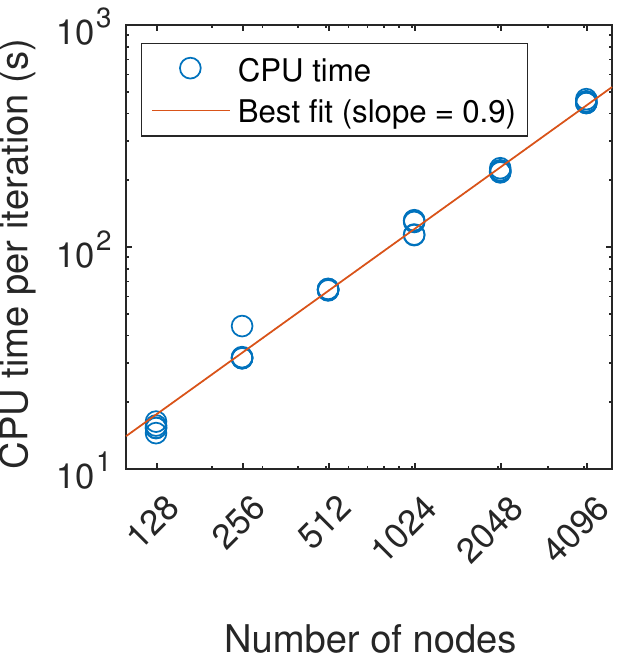}
}
\quad
\subfloat{
\includegraphics[width = 1.5in]{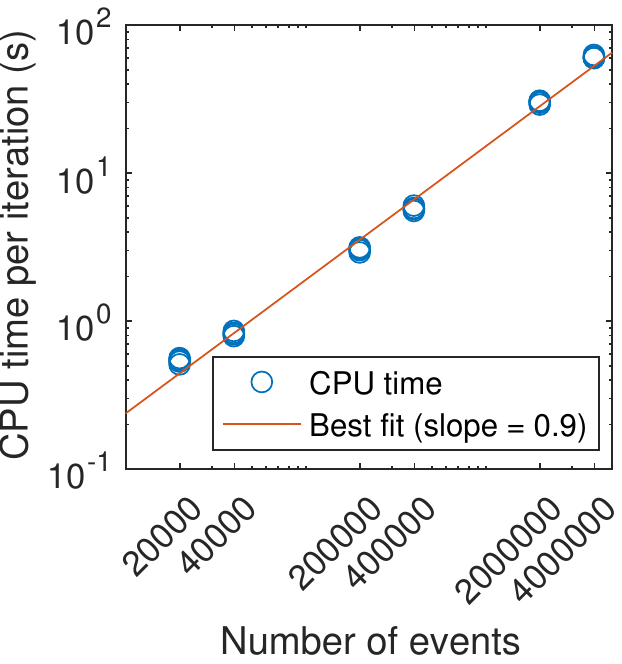}
}
\caption{CPU time per iteration of local search inference in seconds with varying number of nodes and events.}
\label{fig:Scalability}
\end{figure}

\begin{figure*}[t]
	\centering
	\subfloat[Singular values]
	{\includegraphics[height=1.2in]{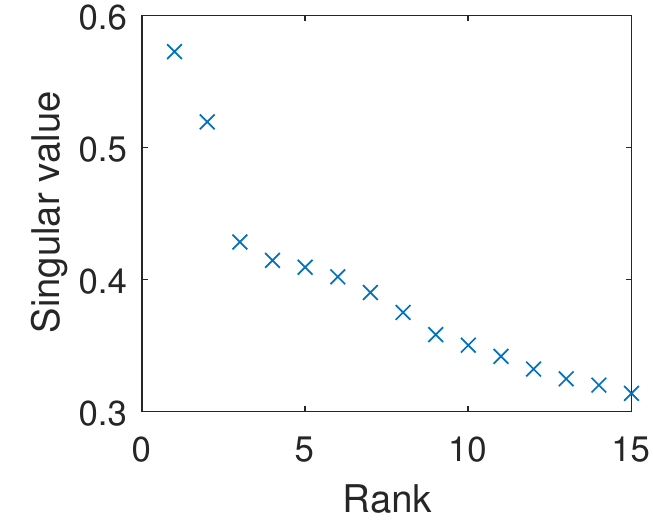}
	\label{fig:RealitySingularValues}}
	\quad
	\subfloat[$\alpha$: Jump sizes]
	{\includegraphics[height=1.2in]{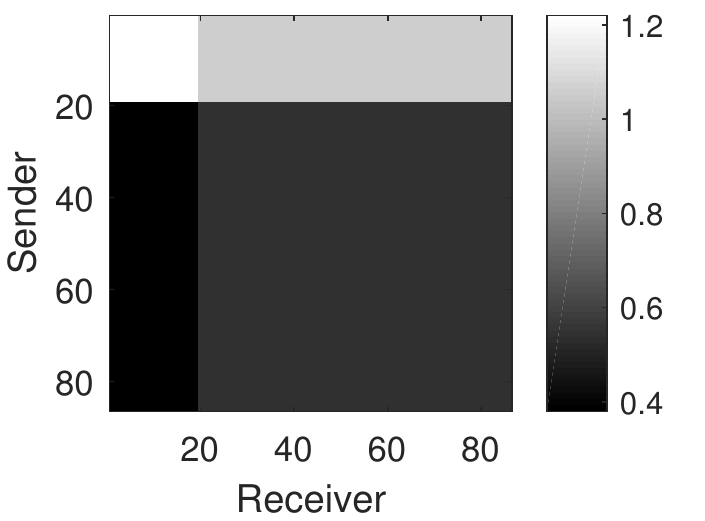}
	\label{fig:RealityAlpha}}
	\quad
	\subfloat[$\beta$: Jump decay rates]
	{\includegraphics[height=1.2in]{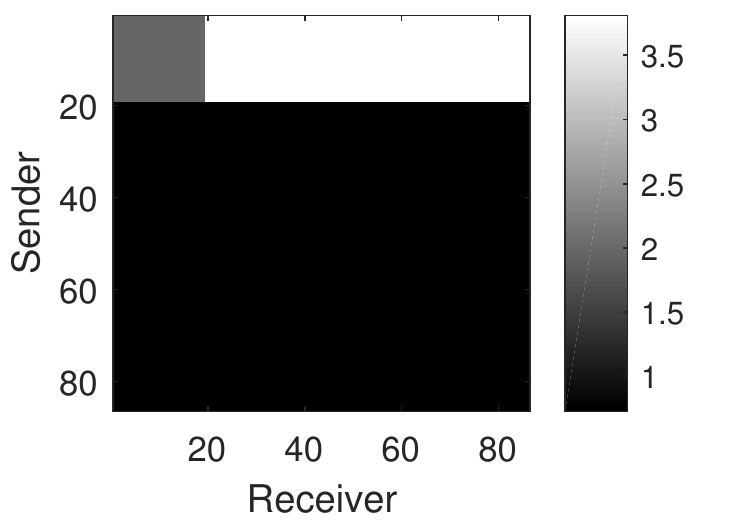}
	\label{fig:RealityBeta}}
	\quad
	\subfloat[$\lambda^\infty$: Background intensity]
	{\includegraphics[height=1.2in]{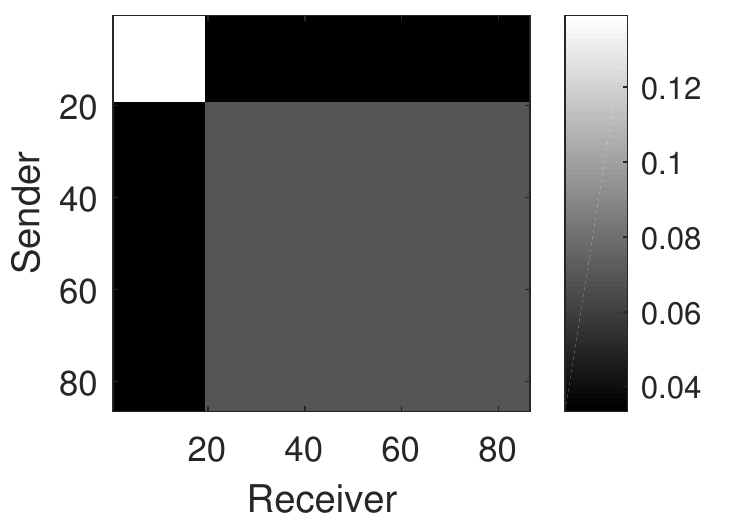}
	\label{fig:RealityLambda}}
	\caption[Block Hawkes model fit on Reality Mining data.]
	{Block Hawkes model fit on Reality Mining data with 2 blocks. 
    Nodes 1-15 belong to block 1 and the rest to block 2. \subref{fig:RealitySingularValues} First 15 singular values of the adjacency matrix. \subref{fig:RealityAlpha}-\subref{fig:RealityLambda} Hawkes process parameter estimates.}
	\label{fig:RealityMiningFit}
\end{figure*}

We evaluate the scalability of the proposed local search inference procedure by generating networks with varying number of nodes and events from the block Hawkes model. 
When varying the number of nodes, we choose a time duration of $1,200$ time units and set the Hawkes process parameters 
$(\alpha,\beta,\lambda^\infty)$ to $(1.6,2,1.2)$ for diagonal block pairs and $(0.6,0.8,0.6)$
for off-diagonal block pairs. 
When varying the number of events, we choose a network with $128$ nodes and set the Hawkes process parameters 
$(\alpha,\beta,\lambda^\infty)$ to $(0.6,0.8,1.8)$ for diagonal block pairs and $(0.6,0.8,0.6)$
for off-diagonal block pairs. 
We keep the number of classes fixed to be $4$ in both settings and simulate $5$ networks for each configuration.

The CPU times per iteration for both varying number of nodes and events are shown in Figure \ref{fig:Scalability} along with 
best-fit lines for a power law relationship (beginning with $512$ nodes for the varying nodes experiment). 
The best-fit line has slope $0.9$ in both cases, confirming the 
linear time complexity both in terms of the number of nodes $N$ and the number of events $M$, which is as expected according to the computed time complexity 
in Section \ref{sec:LocalSearch}. 
All CPU times are recorded on a Linux workstation using $18$ Intel Xeon 
processor cores operating at $2.8$ GHz.

\section{Real Data Experiments}

\subsection{MIT Reality Mining}
We analyze the MIT Reality Mining data \cite{eagle09}, using start times of phone calls as events from the caller to the recipient. 
Nodes in this data set correspond to students and staff at MIT. 
The data were collected over approximately 10 months beginning in August 2004.
We remove all nodes who do not either make or receive a call at any point in the data trace, resulting in a network with 86 nodes.

We observe a significant gap after the second largest singular value of the regularized graph Laplacian, as shown in Figure \ref{fig:RealitySingularValues} so we choose $K=2$ blocks. 
From examining the block Hawkes model parameter estimates shown in Figure \ref{fig:RealityAlpha}-\ref{fig:RealityLambda}, block pair $(1,1)$ has both higher background intensity $\lambda^\infty$ and longer bursts due to high $\alpha$ and low $\beta$. 
Thus, not only does it appear to be a community, but phone calls within the community tend to happen in prolonged bursts. 
On the other hand, block pair $(2,2)$ has higher background intensity than the off-diagonal block pairs but without the large jump sizes, indicating a lack of bursty behavior compared to block pair $(1,1)$. 
Block pair $(1,2)$ has high values for both $\alpha$ and $\beta$, indicating large bursts of short duration. 
However, since block pair $(1,2)$ has a lower background intensity than $(2,2)$, the overall density of the block pair in the aggregated adjacency matrix is not necessarily higher. 
Indeed, block pair $(2,2)$ has a density of $0.041$, while block pair $(2,1)$ has a density of $0.034$. 
Thus, the continuous-time model of the network enables greater analysis of the dynamics of interactions over both short and long periods of time, and the findings may be quite different from those of static and discrete-time network representations. 

\subsection{Facebook Wall Posts}
\label{sec:ExploratoryAnalysis}
\subsubsection{Model-Based Exploratory Analysis}

\begin{figure*}[t]
	\centering
	\subfloat[Singular values]
	{\includegraphics[height=1.2in]{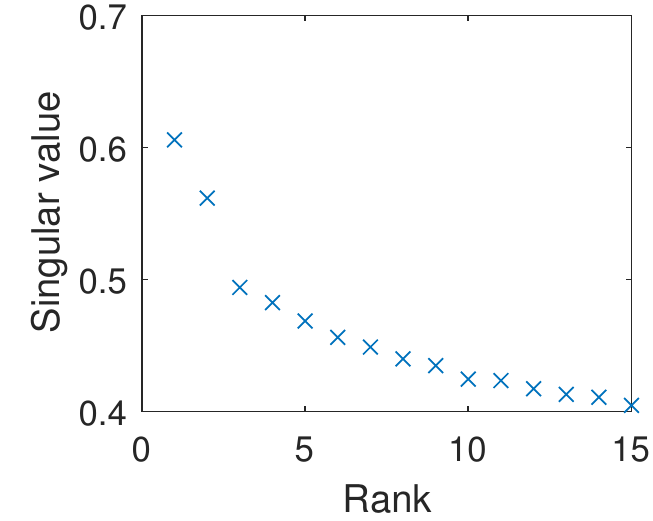}
	\label{fig:FacebookSingularValues}}
	\quad
	\subfloat[$\alpha$: Jump sizes]
	{\includegraphics[height=1.2in]{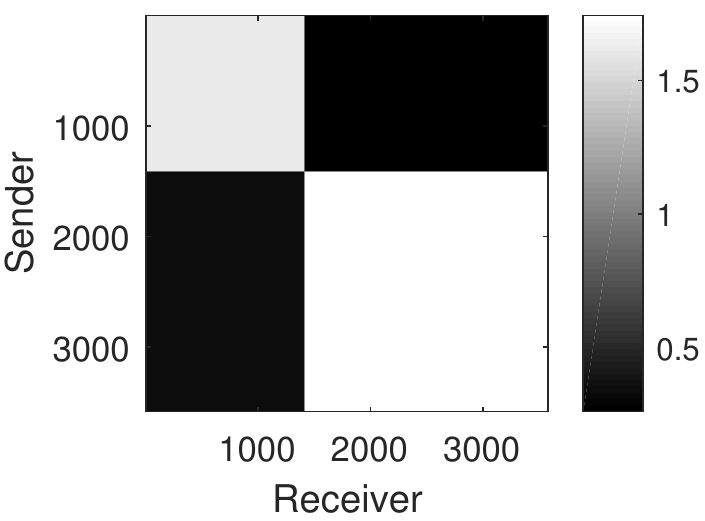}
	\label{fig:FacebookAlpha}}
	\quad
	\subfloat[$\beta$: Jump decay rates]
	{\includegraphics[height=1.2in]{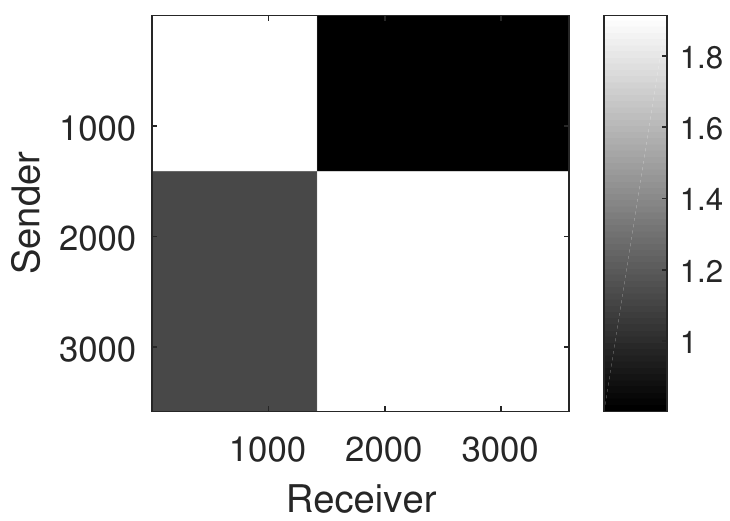}
	\label{fig:FacebookBeta}}
	\quad
	\subfloat[$\lambda^\infty$: Background intensity]
	{\includegraphics[height=1.2in]{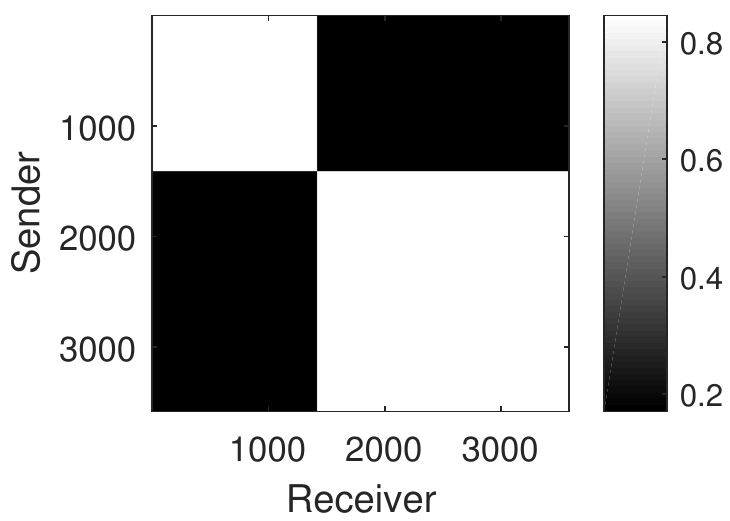}
	\label{fig:FacebookLambda}}
	\caption[Block Hawkes model parameter estimates on Facebook wall post data.]
	{Block Hawkes model parameter estimates on Facebook wall post data with 2 blocks. 
	Nodes 1-1421 belong to block 1 and the rest to block 2. 
	\subref{fig:FacebookSingularValues} First 15 singular values of the adjacency matrix. \subref{fig:FacebookAlpha}-\subref{fig:FacebookLambda} Hawkes process parameter estimates.}
	\label{fig:FacebookParams}
\end{figure*}

We analyze the Facebook wall post data collected 
by \citet{viswanath2009evolution}, which contains over $60,000$ nodes. 
We consider events between January 1, 2007 and January 1, 2008. 
We remove nodes with degree less than $10$, resulting in a network with 
$137,170$ events among $3,582$ nodes.

\begin{figure}[t]
\centering
\includegraphics[width=3.2in]{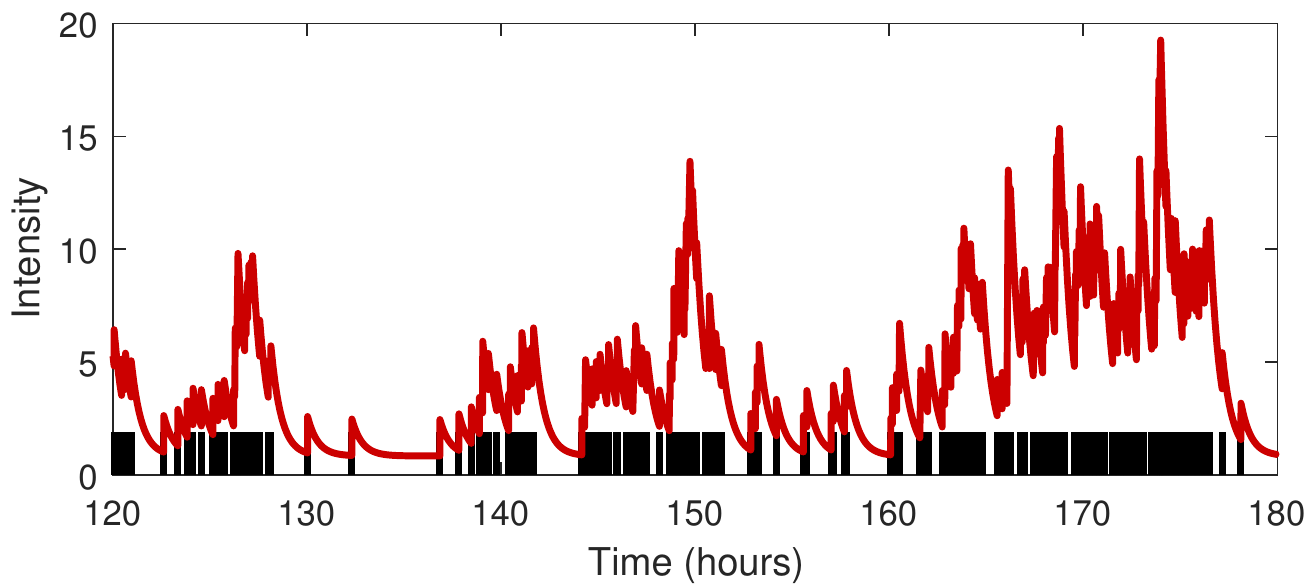}
\caption{Hawkes process intensity function for events in block pair (1,1) 
	of the Facebook data. 
Event times are denoted by black ticks along the horizontal axis.}
\label{fig:Intensity}
\end{figure}

We select a model with $K=2$ blocks, as suggested by the singular values of the regularized graph Laplacian shown in Figure \ref{fig:FacebookSingularValues}. 
The parameters inferred from the BHM fit on the Facebook data are shown in 
Figure \ref{fig:FacebookAlpha}-\ref{fig:FacebookLambda}. 
The diagonal block pairs have larger values of background intensity 
$\lambda^{\infty}$, indicating that the blocks form communities. 
This finding could also have been yielded by static and discrete-time SBMs.
The diagonal block pairs also have higher values of jump sizes $\alpha$, 
indicating that wall posts between members of a community are more bursty. 
A portion of the Hawkes process intensity function for block pair $(1,1)$ is 
shown in Figure \ref{fig:Intensity}. 
In addition to diurnal patterns, one can observe bursty periods of wall posts 
throughout the day. 
This finding could not have been obtained from static and discrete-time SBMs. 
From the values of $\alpha$ on the diagonal block pairs, we see that wall posts within block 2 are more bursty, with higher jump sizes and roughly equal jump decay rates compared to posts within block 1. 
By observing the values of $\alpha$ on off-diagonal block pairs, we notice 
that there isn't much asymmetry, but the decay rate $\beta$ 
exhibits asymmetry. 
Specifically, events from block 1 to 2 have longer sustained 
bursts than events from block 2 to 1 due to the lower value of $\beta_{12}$ compared to $\beta_{21}$.

\subsubsection{Comparison with Discrete-Time SBM}
\label{sec:ExpDisc}
To compare our proposed continuous-time BHM with a discrete-time SBM 
\citep{Xu2014a}, we consider the task of predicting the time to the next event in each block pair. 
We believe that this is a fair comparison because both the BHM and 
discrete-time SBM require that pairs of nodes in the same block pair have 
identical edge probabilities. 
We split the data into two sets, with the first $8$ months of data for training and the last $4$ months for testing.
During each week of the test set, we attempt to predict the time to the next event in each block pair. 
Afterwards, we update the model with all events during that week. 
This results in $16$ predictions in the test set for each block pair.

The BHM directly models event times, so we use the expected next
event time for each block as the prediction. 
The discrete-time SBM does not directly model event times, so we multiply the 
expected number of time snapshots that will elapse before the next edge 
formation, which is geometrically distributed, by the snapshot length and then subtract half the snapshot length (to center the prediction within the snapshot) to get a 
prediction for the next event time. 
Since the prediction for the discrete-time SBM is dependent on the snapshot 
length, we test several different snapshot lengths. 

We estimate class assignments 
for both models using regularized spectral clustering with $2$ classes (with no local 
search iterations) so that differences in class 
estimates between the two models do not play a role in the accuracy. 
We believe this is a valid comparison because spectral clustering 
is used as the initialization to local search 
in the inference procedure for both models, as discussed in Section 
\ref{sec:LocalSearch} for the BHM and in \cite{Xu2014a} 
for the discrete-time SBM.

We evaluate the accuracy of the predictions by computing the root mean-squared 
error (RMSE) between predicted event times and actual event times for 
the first event in each block pair during a week. 
Since the blocks form communities, we expect events to arrive much more frequently within blocks. 
Thus we separate the evaluation into within-block and between-block prediction RMSE.
As shown in Figure \ref{fig:Comparisons}, the accuracy of the discrete-time SBM 
is highly dependent on the snapshot length. 
For snapshots of 6 hours and longer, the loss in temporal resolution is the main 
contributor to the high RMSE. 
The shorter snapshots such as 1 hour and 2 hours have excellent temporal resolution for within-block prediction but are less accurate for between-block prediction. 
Choosing 3-hour long snapshots results in the most accurate between-block predictions. 
Due to the different rates of events within and between blocks, the discrete-time representation must trade off between within-block and between-block prediction accuracy when choosing the snapshot length. 
Using our continuous-time BHM, we avoid this complex 
problem of choosing the snapshot length and produce more accurate predictions (in total RMSE)
than the discrete-time model for any snapshot length, as shown in Figure 
\ref{fig:Comparisons}.

\begin{figure}[t]
\centering
\includegraphics[width = 2.7in]{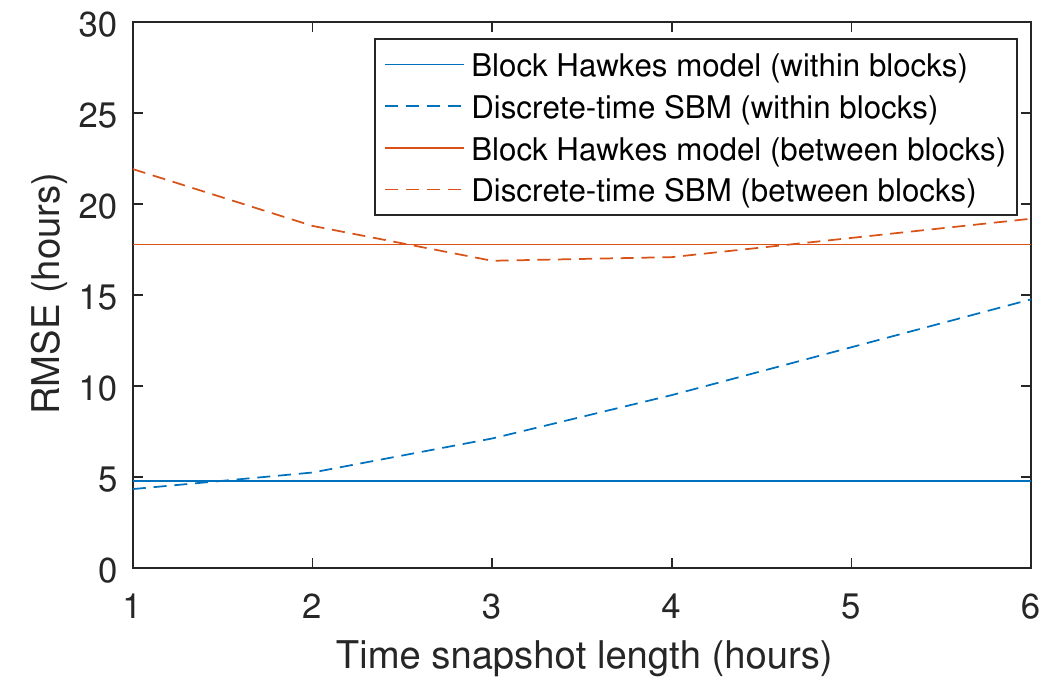}
\caption{Prediction RMSE in hours for block Hawkes model and discrete-time SBMs 
on Facebook data.}
\label{fig:Comparisons}
\end{figure}

\section{Conclusion}
\label{sec:Conclusion}

In this paper, we introduced the block point process model (BPPM) for 
dynamic networks evolving in continuous time in the form of timestamped 
events between nodes. 
Our model was inspired by the well-known stochastic block model (SBM) for 
static networks and is a simpler version of the Hawkes IRM. 
We demonstrated that adjacency matrices constructed from the BPPM follow 
an SBM in the limit of a growing number of nodes. 
To the best of our knowledge, this is the first result of this type connecting 
point process network models with adjacency matrix network models. 
Additionally we proposed a principled and efficient algorithm to fit the BPPM 
using spectral clustering and local search that scales to large networks and apply it to analyze several real networks.

\appendix

\section{Proof of Asymptotic Independence Theorem}
\label{sec:proofDeviation}

We begin with a well-known lemma on the difference of powers that will be used 
both to upper and lower bound the deviation from independence.

\begin{lemma}[Difference of powers]
\label{lem:DiffPowers}
For a real number $x > 1$ and integer $m \geq 1$, we have the following 
identity:
\begin{equation}
m(x-1)^{m-1} \leq x^m - (x-1)^m \leq mx^{m-1}.
\label{eq:DiffPowers}
\end{equation}
\end{lemma}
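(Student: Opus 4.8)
The plan is to derive both inequalities at once from a single algebraic identity, which is cleaner than an induction on $m$ or two separate arguments. The key step is the standard factorization of a difference of $m$th powers: setting the two bases to $x$ and $x-1$, whose difference is exactly $1$, gives
\[
x^m - (x-1)^m = \bigl(x-(x-1)\bigr)\sum_{k=0}^{m-1} x^k (x-1)^{m-1-k} = \sum_{k=0}^{m-1} x^k (x-1)^{m-1-k}.
\]
Because $x > 1$ implies $x-1 > 0$, and the exponents $k$ and $m-1-k$ are nonnegative for each $k \in \{0,\dots,m-1\}$, every one of the $m$ summands is positive. This reduces the lemma to comparing each summand against $x^{m-1}$ for the upper bound and against $(x-1)^{m-1}$ for the lower bound.

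For the upper bound I would use $0 < x-1 < x$, which gives $(x-1)^{m-1-k} \le x^{m-1-k}$, so each summand satisfies $x^k (x-1)^{m-1-k} \le x^k x^{m-1-k} = x^{m-1}$; since there are exactly $m$ summands, $x^m-(x-1)^m \le m x^{m-1}$. The lower bound is symmetric: the same monotonicity $x^k \ge (x-1)^k$ shows each summand is at least $(x-1)^{k}(x-1)^{m-1-k} = (x-1)^{m-1}$, so the sum is at least $m(x-1)^{m-1}$. Combining the two estimates yields \eqref{eq:DiffPowers}.

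There is no serious obstacle here; the only points that need care are the boundary case and the direction of the exponent comparisons. When $m=1$ the sum has a single term equal to $1$, and both bounds reduce to the equality $1 = 1$, so the statement remains consistent at the edge. One must also verify that raising the inequality $x-1 < x$ to a nonnegative power preserves its direction, which holds precisely because both bases are positive. An alternative route is the mean value theorem applied to $f(t)=t^m$ on $[x-1,x]$, giving $x^m-(x-1)^m = m\,\xi^{m-1}$ for some $\xi \in (x-1,x)$ and then bounding $\xi^{m-1}$ using monotonicity of $t \mapsto t^{m-1}$ on the positive reals; I nonetheless prefer the factorization, since it is purely algebraic and makes the contribution of each summand transparent, which will matter when this lemma is reused to bound both $\delta_0$ and $\delta_1$.
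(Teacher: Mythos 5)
Your proposal is correct and follows essentially the same route as the paper's proof: both use the factorization $x^m - (x-1)^m = \sum_{k=0}^{m-1} x^k (x-1)^{m-1-k}$ and bound each of the $m$ summands between $(x-1)^{m-1}$ and $x^{m-1}$. The only differences are cosmetic (index convention and the explicit check of the $m=1$ case, which the paper leaves implicit).
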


\begin{proof}
The proof follows straightforwardly from factorizing a difference of powers. 
Specifically, for real numbers $x > y > 0$ and integer $m \geq 1$, 
\begin{equation}
x^m - y^m = (x-y) \sum_{i=0}^{m-1} x^{m-1-i} y^{i}.
\label{eq:Factorize}
\end{equation}
If $x > 1$ and $y = x-1$, then \eqref{eq:Factorize} becomes
\begin{equation*}
x^m - (x-1)^m = \sum_{i=0}^{m-1} x^{m-1-i} (x-1)^{i}.
\end{equation*}
There are $m$ terms in the summation. 
The largest term is $x^{m-1}$, and the smallest term is $(x-1)^{m-1}$. 
Thus, we can upper and lower bound the sum by $mx^{m-1}$ and $m(x-1)^{m-1}$, 
respectively, to arrive at \eqref{eq:DiffPowers}.
\end{proof}

The next lemma will be used in the upper bound.
\begin{lemma}
\label{lem:UBLemma}
For a real number $x > 1$ and integer $m \geq 1$,
\begin{equation*}
\left(\frac{n-1}{n}\right)^m \geq 1 - \frac{m}{n}.
\end{equation*}
\end{lemma}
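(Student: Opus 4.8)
The plan is to derive this inequality directly from the upper bound already established in Lemma~\ref{lem:DiffPowers}, which is presumably why that lemma was proved first and flagged as being ``used in the upper bound.'' (I read the hypothesis as $n > 1$, since $n$ is the variable that actually appears in the claimed inequality.) The key observation is that the right-hand side $1 - m/n$ is exactly what the difference-of-powers bound produces after normalizing by $n^m$, so the whole proof reduces to a single substitution followed by a division.

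Concretely, I would instantiate the upper bound $x^m - (x-1)^m \leq m\,x^{m-1}$ from Lemma~\ref{lem:DiffPowers} at $x = n$, which is legitimate because $n > 1$ and $m \geq 1$ is an integer, matching the hypotheses. This gives
\[
n^m - (n-1)^m \leq m\,n^{m-1}.
\]
Dividing both sides by $n^m > 0$ preserves the inequality and yields
\[
1 - \left(\frac{n-1}{n}\right)^m \leq \frac{m}{n},
\]
and rearranging this is precisely the claim.

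An equivalent route is Bernoulli's inequality applied to $\bigl(1 - \tfrac{1}{n}\bigr)^m$ with the perturbation $-1/n \geq -1$, but channeling the argument through Lemma~\ref{lem:DiffPowers} keeps the appendix self-contained and reuses machinery already in hand. There is essentially no obstacle here: the only points needing a moment's verification are that the substitution $x = n$ satisfies $x > 1$ and that dividing by the positive quantity $n^m$ is sign-preserving, both of which are immediate.
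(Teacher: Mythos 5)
Your proof is correct and is essentially identical to the paper's: both apply the upper bound of Lemma~\ref{lem:DiffPowers} at $x = n$, divide by $n^m$, and rearrange (the paper merely writes the algebra starting from $\left(\frac{n-1}{n}\right)^m - 1$ rather than from the difference-of-powers inequality itself). Your reading of the hypothesis as $n > 1$ also correctly resolves the statement's $x$/$n$ typo.
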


\begin{proof}
\begin{align}
\left(\frac{n-1}{n}\right)^m - 1 &= \frac{(n-1)^m - n^m}{n^m} \nonumber \\
&= -\frac{n^m - (n-1)^m}{n^m} \nonumber \\
\label{eq:UBLemma1}
&\geq -\frac{mn^{m-1}}{n^m} \\
\label{eq:UBLemma2}
&= -\frac{m}{n},
\end{align}
where \eqref{eq:UBLemma1} follows from applying Lemma \ref{lem:DiffPowers}. 
Adding $1$ to both sides of \eqref{eq:UBLemma2}, we arrive at the desired 
result.
\end{proof}

We now prove the \nameref{thm:AsyIndep}. 

\begin{proof}[Proof of \nameref{thm:AsyIndep}]
First compute the marginal probability $\Pr(a_{ij}=0)$. 
$a_{ij}=0$ implies that no events between nodes $i$ and $j$ occurred. 
To compute this probability, we first compute the conditional probability 
given that the number of events in block pair $b$ is $m_b$. 
To simplify notation, we drop the subscript $b$ from $m_b$ and $n_b$ in the 
remainder of the proof, so the conditional probability can be written as 
\begin{equation*}
\Pr(a_{ij}=0|m) = \left(\frac{n-1}{n}\right)^m, \, m \geq 0, 
\end{equation*}
where the equality follows by noting that, conditioned on $m$ total 
events in block pair $b$, the number of events between nodes $i$ and $j$ 
follows a binomial distribution with $m$ trials and success probability $1/n$. 
The $1/n$ success probability is due to step \ref{item:selectNodes} of the 
generative process of the BPPM, which involves selecting node pairs randomly 
to receive an event. 
By the Law of Total Probability, the marginal probability is
\begin{align}
\Pr(a_{ij}=0) &= \sum_{m=0}^\infty p(m) \Pr(a_{ij}=0 | m) \nonumber \\
&= \sum_{m=0}^\infty p(m) \left(\frac{n-1}{n}\right)^m,
\label{eq:MargProb}
\end{align}
where the probability mass function $p(m)$ denotes the probability that 
$m$ events in block pair $b$ occurred.

Next consider the joint probability $\Pr(a_{ij}=0, a_{i'j'}=0)$. 
As before, condition on the number of events $m$. 
The conditional joint probability is
\begin{equation*}
\Pr(a_{ij} = 0, a_{i'j'} = 0|m) = \left(\frac{n-2}{n}\right)^m, \; m \geq 0,
\end{equation*}
because the number of events for each node pair in block pair $b$ 
follow a multinomial distribution 
with $m$ trials and all event probabilities equal to $1/n$. 
By the Law of Total Probability,
\begin{equation}
\label{eq:JointProb}
\Pr(a_{ij}=0,a_{i'j'}=0) = \sum_{m=0}^\infty p(m) \left(\frac{n-2}{n}
	\right)^{m}.
\end{equation}

We first lower bound $\delta_0$ by noting that
\begin{align}
\delta_0 &= \Pr(a_{ij}=0 | a_{i'j'}=0) - \Pr(a_{ij}=0) \nonumber \\
&\geq \Pr(a_{ij}=0,a_{i'j'}=0) - \Pr(a_{ij}=0) \nonumber \\
\label{eq:SumLowerBound}
&= \sum_{m=0}^\infty p(m) \left[\left(\frac{n-2}{n}\right)^m 
	- \left(\frac{n-1}{n}\right)^m\right] \\
&= -\sum_{m=0}^\infty p(m) \frac{(n-1)^m - (n-2)^m}{n^m} \nonumber \\
\label{eq:LowerBoundDiffPow}
&\geq -\sum_{m=0}^\infty p(m) \frac{m(n-1)^{m-1}}{n^m} \\
&= -\frac{1}{n} \sum_{m=0}^\infty m p(m) \left(\frac{n-1}{n}\right)^{m-1} 
	\nonumber \\
\label{eq:LowerBoundInterm}
&\geq -\frac{1}{n} \sum_{m=0}^\infty m p(m) \\
\label{eq:LowerBound}
&= -\frac{\mu}{n},
\end{align}
where \eqref{eq:SumLowerBound} follows from \eqref{eq:MargProb} and 
\eqref{eq:JointProb}, \eqref{eq:LowerBoundDiffPow} follows from Lemma 
\ref{lem:DiffPowers}, 
and \eqref{eq:LowerBoundInterm} follows by observing that 
$\big(\frac{n-1}{n}\big)^{m-1} \leq 1$. 

We now upper bound $\delta_0$ by noting that
\begin{align}
\delta_0 &= \Pr(a_{ij}=0 | a_{i'j'}=0) - \Pr(a_{ij}=0) \nonumber \\
&\leq 1 - \Pr(a_{ij}=0) \nonumber \\
&= 1 - \sum_{m=0}^\infty p(m) \left(\frac{n-1}{n}\right)^m \nonumber \\
\label{eq:UpperBoundDiffPow}
&\leq 1 - \sum_{m=0}^\infty p(m) \left(1 - \frac{m}{n}\right) \\
&= 1 - \sum_{m=0}^\infty p(m) + \frac{1}{n} \sum_{m=0}^\infty m p(m) 
	\nonumber \\
&= 1 - 1 + \frac{\mu}{n} \nonumber \\
\label{eq:UpperBound}
&= \frac{\mu}{n},
\end{align}
where \eqref{eq:UpperBoundDiffPow} follows from Lemma \ref{lem:UBLemma}. 

Next we lower bound $\delta_1$ by noting that
\begin{align}
\delta_1 &= \Pr(a_{ij}=0 | a_{i'j'}=1) - \Pr(a_{ij}=0) \nonumber \\
&= \frac{\Pr(a_{ij}=0, a_{i'j'}=1) - \Pr(a_{ij}=0) \Pr(a_{i'j'}=1)}
	{\Pr(a_{i'j'} = 1)} \nonumber \\
&\geq \Pr(a_{ij}=0, a_{i'j'}=1) - \Pr(a_{ij}=0) \Pr(a_{i'j'}=1) 
	\nonumber \\
&= \Pr(a_{ij}=0) - \Pr(a_{ij}=0, a_{i'j'}=0) \nonumber \\
&\qquad- \Pr(a_{ij}=0)[1 
	- \Pr(a_{i'j'}=0)] \nonumber \\
&= -[\Pr(a_{ij}=0, a_{i'j'}=0) - \Pr(a_{ij}=0) \Pr(a_{i'j'}=0)] \nonumber \\
&= -\Pr(a_{i'j'}=0) [\Pr(a_{ij}=0 | a_{i'j'}=0) - \Pr(a_{ij}=0)] \nonumber \\
&= -\Pr(a_{i'j'}=0) \delta_0 \nonumber \\
&\geq -\delta_0 \nonumber \\
\label{eq:LowerBound1}
&\geq -\frac{\mu}{n},
\end{align}
where \eqref{eq:LowerBound1} follows from \eqref{eq:UpperBound}.

Finally we upper bound $\delta_1$ using the same approach as for $\delta_0$: 
\begin{align}
\delta_1 &= \Pr(a_{ij}=0 | a_{i'j'}=1) - \Pr(a_{ij}=0) \nonumber \\
&\leq 1 - \Pr(a_{ij} = 0) \nonumber \\
\label{eq:UpperBound1}
&\leq \frac{\mu}{n},
\end{align}
where the final inequality is obtained from \eqref{eq:UpperBound}.

Combining \eqref{eq:LowerBound}, \eqref{eq:UpperBound}, 
\eqref{eq:LowerBound1}, and \eqref{eq:UpperBound1}, and noting that the 
maximum deviation between two probabilities is $1$, we arrive at the 
desired result $|\delta_0|,|\delta_1| \leq \min\{1,\mu/n\}$, which completes 
the proof.
\end{proof}

\section{Variational Inference Details}
\label{sec:AppendixInference}
We begin by deriving the complete-data log-likelihood for the block point process model. 
\begin{align}
\log &\Pr(E,Z|\theta,\vec{\pi}) \nonumber \\
&= \log \Pr(Z|\vec{\pi}) + \log \Pr(\vec{u},\vec{v}, \vec{t}|Z,\theta) \nonumber \\
\label{eq:completeDataLogLikelihood}
&= \log \Pr(Z|\vec{\pi}) + \log \Pr(\vec{u},\vec{v}|Z) + \log \Pr(\vec{t}|\vec{u},\vec{v},Z,\theta)
\end{align}
The first term in \eqref{eq:completeDataLogLikelihood} represents the prior class probability and is given by 
\begin{equation*}
\log \Pr(Z|\vec{\pi}) = \log \left(\prod_{i=1}^N \prod_{q=1}^K \pi_{iq}^{z_{iq}}\right) = \sum_{i=1}^N \sum_{q=1}^K z_{iq}\log(\pi_{iq}).
\end{equation*}
The second term in \eqref{eq:completeDataLogLikelihood} represents the uniform distribution of events to nodes inside a block pair and is given by
\begin{align*}
\log \Pr(\vec{u},\vec{v}|Z) &= \log \left[\prod_{q=1}^K \prod_{l=1}^K \left(\frac{1}{n_{ql}}\right)^{m_{ql}}\right] \\
&= - \sum_{q=1}^{K}\sum_{l=1}^{K} m_{ql} \log(n_{ql}) \\
&= - \sum_{q=1}^{K}\sum_{l=1}^{K} \vec{z}_{:q}^{T} W \vec{z}_{:l} \log(\vec{z}_{:q}^{T} \vec{z}_{:l}),
\end{align*}
where $m_{ql}$ and $n_{ql}$ are defined in the same manner as $m_b$ and $n_b$ in Section \ref{sec:LocalSearch}. 
The last equality represents these two terms as a function of $Z$, where $\vec{z}_{:q}$ denotes the $q$th column of $\vec{z}$, and a weighted adjacency matrix $W$ where the weight $w_{ij}$ corresponds to the total number of events from node $i$ to node $j$. 
The third term in \eqref{eq:completeDataLogLikelihood} depends on the choice of point process; for the block Hawkes model, it is given by
\begin{align*}
\log &\Pr(\vec{t}|\vec{u},\vec{v},Z,\theta) =  \sum_{q=1}^{K}\sum_{l=1}^{K} \Biggl(\sum_{s=1}^M\Biggl\{z_{u_s q} z_{v_s l}  \frac{\alpha_{ql}}{\beta_{ql}}\left[e^{-\beta_{ql}(t_M-t_s)} - 1\right] \nonumber \\
& + \log \left[ \lambda_{ql}^{\infty} +\sum_{r=1}^{s-1} z_{u_r q} z_{v_r l} \alpha_{ql} e^{-\beta_{ql}(t_s-t_r)}\right]\Biggr\} - t_M\lambda_{ql}^{\infty} \Biggr),
\end{align*}
where $u_s$ and $v_s$ denote the sender and receiver nodes for event $s$.
The complete-data log-likelihood \eqref{eq:completeDataLogLikelihood} is intractable since changing the class membership of
one node might affect the class memberships of other nodes, so all possibilities
of $Z$ need to be considered.

We derive a mean-field variational inference procedure in which we approximate the posterior with the fully
factorizable variational distribution given by \eqref{eq:VarDist}. 
We now try to find the best distribution from the family of multinomial distributions
that can get us closest in KL divergence to the posterior. 
But calculating KL divergence requires calculating the intractable posterior, so we instead maximize the evidence lower bound (ELBO), which is equivalent to minimizing KL divergence \cite{blei2017}. 
The ELBO is given by
\begin{equation*}
\text{ELBO}(R_E) = \E\left[\log \Pr(E,Z|\theta,\vec{\pi})\right]-\E\left[\log R_E(Z)\right]. 
\end{equation*}
Now expand and calculate expectation of individual terms in the complete-data log-likelihood \eqref{eq:completeDataLogLikelihood}.
The expectation of the log of the summations in the second and third terms in \eqref{eq:completeDataLogLikelihood} can be bounded using Jensen's inequality: 
\begin{equation*}
\E [f(x)] \geq f(\E(x)).
\end{equation*}
After bounding the log sums using Jensen's inequality and calculating the
expectations with respect to $R_{E}$, we arrive at the simplified expression for the ELBO: 
\begin{align}
\mathcal{E}&(R_E) =\sum_{i=1}^N\sum_{q=1}^K\tau_{iq}\log(\pi_{iq}) - \sum_{q=1}^{K}\sum_{l=1}^{K} \vec{\tau}_{:q}^{T} W \vec{\tau}_{:l} \log(\vec{\tau}_{:q}^{T} \vec{\tau}_{:l}) \nonumber \\ 
&+ \sum_{q=1}^{K}\sum_{l=1}^{K} \Biggl(\sum_{s=1}^M\Biggl\{\tau_{u_s q} \tau_{v_s l} 
\frac{\alpha_{ql}}{\beta_{ql}}\left[e^{-\beta_{ql}(t_M-t_s)} - 1\right] + \log \Biggl[ \lambda_{ql}^{\infty} \nonumber \\
\label{eq:ELBO}
&+\sum_{r=1}^{s-1} \tau_{u_r q} \tau_{v_r l} \alpha_{ql} e^{-\beta_{ql}(t_s-t_r)}\Biggr]\Biggr\} - t_M\lambda_{ql}^{\infty} \Biggr) +\sum_{i=1}^N\sum_{q=1}^{K}\tau_{iq}\log(\tau_{iq})
\end{align}
where $\vec{\tau}_{:q}$ denotes the vector of variational parameters for class $q$ over all $N$ nodes. 
The variational expectation-maximization (VEM) algorithm alternates between a variational E step, in which we use coordinate ascent to optimize the ELBO \eqref{eq:ELBO} over the variational parameters $\vec{\tau}_i$ for each node $i$, and an M step, in which we optimize the ELBO over the Hawkes process parameters $\big(\alpha_{ql}, \beta_{ql}, \lambda_{ql}^{\infty}\big)$ for all block pairs $(q,l) \in \{1, \ldots, K\}^2$.

\begin{acks}
This material is based upon work supported by  the \grantsponsor{nsf}{National Science Foundation}{https://nsf.gov} grants \grantnum{nsf}{IIS-1755824} and \grantnum{nsf}{DMS-1830412}.
\end{acks}

\bibliographystyle{ACM-Reference-Format}
\balance
\bibliography{WWW_2019_Block}


\begin{thebibliography}{44}


\ifx \showCODEN    \undefined \def \showCODEN     #1{\unskip}     \fi
\ifx \showDOI      \undefined \def \showDOI       #1{#1}\fi
\ifx \showISBNx    \undefined \def \showISBNx     #1{\unskip}     \fi
\ifx \showISBNxiii \undefined \def \showISBNxiii  #1{\unskip}     \fi
\ifx \showISSN     \undefined \def \showISSN      #1{\unskip}     \fi
\ifx \showLCCN     \undefined \def \showLCCN      #1{\unskip}     \fi
\ifx \shownote     \undefined \def \shownote      #1{#1}          \fi
\ifx \showarticletitle \undefined \def \showarticletitle #1{#1}   \fi
\ifx \showURL      \undefined \def \showURL       {\relax}        \fi
\providecommand\bibfield[2]{#2}
\providecommand\bibinfo[2]{#2}
\providecommand\natexlab[1]{#1}
\providecommand\showeprint[2][]{arXiv:#2}

\bibitem[\protect\citeauthoryear{Blei, Kucukelbir, and McAuliffe}{Blei
  et~al\mbox{.}}{2017}]%
        {blei2017}
\bibfield{author}{\bibinfo{person}{David~M. Blei}, \bibinfo{person}{Alp
  Kucukelbir}, {and} \bibinfo{person}{Jon~D. McAuliffe}.}
  \bibinfo{year}{2017}\natexlab{}.
\newblock \showarticletitle{Variational inference: A review for statisticians}.
\newblock \bibinfo{journal}{\emph{J. Amer. Statist. Assoc.}}
  \bibinfo{volume}{112}, \bibinfo{number}{518} (\bibinfo{year}{2017}),
  \bibinfo{pages}{859--877}.
\newblock


\bibitem[\protect\citeauthoryear{Blundell, Beck, and Heller}{Blundell
  et~al\mbox{.}}{2012}]%
        {Blundell2012}
\bibfield{author}{\bibinfo{person}{Charles Blundell}, \bibinfo{person}{Jeff
  Beck}, {and} \bibinfo{person}{Katherine~A. Heller}.}
  \bibinfo{year}{2012}\natexlab{}.
\newblock \showarticletitle{Modelling Reciprocating Relationships with {Hawkes}
  Processes}. In \bibinfo{booktitle}{\emph{Advances in Neural Information
  Processing Systems 25}}. \bibinfo{pages}{2600--2608}.
\newblock


\bibitem[\protect\citeauthoryear{Byrd, Gilbert, and Nocedal}{Byrd
  et~al\mbox{.}}{2000}]%
        {byrd2000}
\bibfield{author}{\bibinfo{person}{Richard~H. Byrd},
  \bibinfo{person}{Jean~Charles Gilbert}, {and} \bibinfo{person}{Jorge
  Nocedal}.} \bibinfo{year}{2000}\natexlab{}.
\newblock \showarticletitle{A trust region method based on interior point
  techniques for nonlinear programming}.
\newblock \bibinfo{journal}{\emph{Mathematical Programming}}
  \bibinfo{volume}{89}, \bibinfo{number}{1} (\bibinfo{year}{2000}),
  \bibinfo{pages}{149--185}.
\newblock


\bibitem[\protect\citeauthoryear{C{\^{o}}me and Latouche}{C{\^{o}}me and
  Latouche}{2015}]%
        {Come2015}
\bibfield{author}{\bibinfo{person}{Etienne C{\^{o}}me} {and}
  \bibinfo{person}{Pierre Latouche}.} \bibinfo{year}{2015}\natexlab{}.
\newblock \showarticletitle{{Model selection and clustering in stochastic block
  models based on the exact integrated complete data likelihood}}.
\newblock \bibinfo{journal}{\emph{Statistical Modelling}} \bibinfo{volume}{15},
  \bibinfo{number}{6} (\bibinfo{year}{2015}), \bibinfo{pages}{564--589}.
\newblock


\bibitem[\protect\citeauthoryear{Corneli, Latouche, and Rossi}{Corneli
  et~al\mbox{.}}{2016}]%
        {Corneli2015}
\bibfield{author}{\bibinfo{person}{Marco Corneli}, \bibinfo{person}{Pierre
  Latouche}, {and} \bibinfo{person}{Fabrice Rossi}.}
  \bibinfo{year}{2016}\natexlab{}.
\newblock \showarticletitle{{Exact ICL maximization in a non-stationary
  temporal extension of the stochastic block model for dynamic networks}}.
\newblock \bibinfo{journal}{\emph{Neurocomputing}}  \bibinfo{volume}{192}
  (\bibinfo{year}{2016}), \bibinfo{pages}{81--91}.
\newblock


\bibitem[\protect\citeauthoryear{Daudin, Picard, and Robin}{Daudin
  et~al\mbox{.}}{2008}]%
        {daudin2008}
\bibfield{author}{\bibinfo{person}{Jean-Jacques Daudin},
  \bibinfo{person}{Franck Picard}, {and} \bibinfo{person}{St{\'e}phane Robin}.}
  \bibinfo{year}{2008}\natexlab{}.
\newblock \showarticletitle{{A mixture model for random graphs}}.
\newblock \bibinfo{journal}{\emph{Statistics and Computing}}
  \bibinfo{volume}{18}, \bibinfo{number}{2} (\bibinfo{year}{2008}),
  \bibinfo{pages}{173--183}.
\newblock


\bibitem[\protect\citeauthoryear{DuBois, Butts, and Smyth}{DuBois
  et~al\mbox{.}}{2013}]%
        {Dubois2013}
\bibfield{author}{\bibinfo{person}{Christopher DuBois},
  \bibinfo{person}{Carter~T. Butts}, {and} \bibinfo{person}{Padhraic Smyth}.}
  \bibinfo{year}{2013}\natexlab{}.
\newblock \showarticletitle{{Stochastic blockmodeling of relational event
  dynamics}}. In \bibinfo{booktitle}{\emph{Proceedings of the 16th
  International Conference on Artificial Intelligence and Statistics}}.
  \bibinfo{pages}{238--246}.
\newblock


\bibitem[\protect\citeauthoryear{DuBois and Smyth}{DuBois and Smyth}{2010}]%
        {DuBois2010}
\bibfield{author}{\bibinfo{person}{Christopher DuBois} {and}
  \bibinfo{person}{Padhraic Smyth}.} \bibinfo{year}{2010}\natexlab{}.
\newblock \showarticletitle{{Modeling relational events via latent classes}}.
  In \bibinfo{booktitle}{\emph{Proceedings of the 16th ACM SIGKDD International
  Conference on Knowledge Discovery and Data Mining}}.
  \bibinfo{pages}{803--812}.
\newblock


\bibitem[\protect\citeauthoryear{Eagle, Pentland, and Lazer}{Eagle
  et~al\mbox{.}}{2009}]%
        {eagle09}
\bibfield{author}{\bibinfo{person}{Nathan Eagle}, \bibinfo{person}{Alex
  Pentland}, {and} \bibinfo{person}{David Lazer}.}
  \bibinfo{year}{2009}\natexlab{}.
\newblock \showarticletitle{{Inferring friendship network structure by using
  mobile phone data}}.
\newblock \bibinfo{journal}{\emph{Proceedings of the National Academy of
  Sciences}}  \bibinfo{volume}{106} (\bibinfo{year}{2009}),
  \bibinfo{pages}{15274--15278}.
\newblock


\bibitem[\protect\citeauthoryear{Farajtabar, Wang, Rodriguez, Li, Zha, and
  Song}{Farajtabar et~al\mbox{.}}{2015}]%
        {Farajtabar2015}
\bibfield{author}{\bibinfo{person}{Mehrdad Farajtabar}, \bibinfo{person}{Yichen
  Wang}, \bibinfo{person}{Manuel~Gomez Rodriguez}, \bibinfo{person}{Shuang Li},
  \bibinfo{person}{Hongyuan Zha}, {and} \bibinfo{person}{Le Song}.}
  \bibinfo{year}{2015}\natexlab{}.
\newblock \showarticletitle{{COEVOLVE: A joint point process model for
  information diffusion and network co-evolution}}. In
  \bibinfo{booktitle}{\emph{Advances in Neural Information Processing Systems
  28}}. \bibinfo{pages}{1945--1953}.
\newblock


\bibitem[\protect\citeauthoryear{Fox, Short, Schoenberg, Coronges, and
  Bertozzi}{Fox et~al\mbox{.}}{2016}]%
        {Fox2016}
\bibfield{author}{\bibinfo{person}{Eric~W. Fox}, \bibinfo{person}{Martin~B.
  Short}, \bibinfo{person}{Frederic~P. Schoenberg}, \bibinfo{person}{Kathryn~D.
  Coronges}, {and} \bibinfo{person}{Andrea~L. Bertozzi}.}
  \bibinfo{year}{2016}\natexlab{}.
\newblock \showarticletitle{{Modeling e-mail networks and inferring leadership
  using self-exciting point processes}}.
\newblock \bibinfo{journal}{\emph{J. Amer. Statist. Assoc.}}
  \bibinfo{volume}{111}, \bibinfo{number}{514} (\bibinfo{year}{2016}),
  \bibinfo{pages}{564--584}.
\newblock


\bibitem[\protect\citeauthoryear{Goldenberg, Zheng, Fienberg, and
  Airoldi}{Goldenberg et~al\mbox{.}}{2010}]%
        {goldenberg2010survey}
\bibfield{author}{\bibinfo{person}{Anna Goldenberg}, \bibinfo{person}{Alice~X.
  Zheng}, \bibinfo{person}{Stephen~E. Fienberg}, {and}
  \bibinfo{person}{Edoardo~M. Airoldi}.} \bibinfo{year}{2010}\natexlab{}.
\newblock \showarticletitle{A survey of statistical network models}.
\newblock \bibinfo{journal}{\emph{Foundations and Trends{\textregistered} in
  Machine Learning}} \bibinfo{volume}{2}, \bibinfo{number}{2}
  (\bibinfo{year}{2010}), \bibinfo{pages}{129--233}.
\newblock


\bibitem[\protect\citeauthoryear{Hall and Willett}{Hall and Willett}{2016}]%
        {hall2014tracking}
\bibfield{author}{\bibinfo{person}{Eric~C. Hall} {and}
  \bibinfo{person}{Rebecca~M. Willett}.} \bibinfo{year}{2016}\natexlab{}.
\newblock \showarticletitle{{Tracking dynamic point processes on networks}}.
\newblock \bibinfo{journal}{\emph{IEEE Transactions on Information Theory}}
  \bibinfo{volume}{62}, \bibinfo{number}{7} (\bibinfo{year}{2016}),
  \bibinfo{pages}{4327--4346}.
\newblock


\bibitem[\protect\citeauthoryear{Halpin and De~Boeck}{Halpin and
  De~Boeck}{2013}]%
        {halpin2013modelling}
\bibfield{author}{\bibinfo{person}{Peter~F. Halpin} {and} \bibinfo{person}{Paul
  De~Boeck}.} \bibinfo{year}{2013}\natexlab{}.
\newblock \showarticletitle{{Modelling dyadic interaction with Hawkes
  processes}}.
\newblock \bibinfo{journal}{\emph{Psychometrika}} \bibinfo{volume}{78},
  \bibinfo{number}{4} (\bibinfo{year}{2013}), \bibinfo{pages}{793--814}.
\newblock


\bibitem[\protect\citeauthoryear{Han, Xu, and Airoldi}{Han
  et~al\mbox{.}}{2015}]%
        {Han2014}
\bibfield{author}{\bibinfo{person}{Qiuyi Han}, \bibinfo{person}{Kevin~S. Xu},
  {and} \bibinfo{person}{Edoardo~M. Airoldi}.} \bibinfo{year}{2015}\natexlab{}.
\newblock \showarticletitle{{Consistent estimation of dynamic and multi-layer
  block models}}. In \bibinfo{booktitle}{\emph{Proceedings of the 32nd
  International Conference on Machine Learning}}. \bibinfo{pages}{1511--1520}.
\newblock


\bibitem[\protect\citeauthoryear{He, Rekatsinas, Foulds, Getoor, and Liu}{He
  et~al\mbox{.}}{2015}]%
        {He2015}
\bibfield{author}{\bibinfo{person}{Xinran He}, \bibinfo{person}{Theodoros
  Rekatsinas}, \bibinfo{person}{James Foulds}, \bibinfo{person}{Lise Getoor},
  {and} \bibinfo{person}{Yan Liu}.} \bibinfo{year}{2015}\natexlab{}.
\newblock \showarticletitle{{HawkesTopic: A joint model for network inference
  and topic modeling from text-based cascades}}. In
  \bibinfo{booktitle}{\emph{Proceedings of the 32nd International Conference on
  Machine Learning}}. \bibinfo{pages}{871--880}.
\newblock


\bibitem[\protect\citeauthoryear{Ho, Song, and Xing}{Ho et~al\mbox{.}}{2011}]%
        {ho11}
\bibfield{author}{\bibinfo{person}{Qirong Ho}, \bibinfo{person}{Le Song}, {and}
  \bibinfo{person}{Eric~P. Xing}.} \bibinfo{year}{2011}\natexlab{}.
\newblock \showarticletitle{{Evolving cluster mixed-membership blockmodel for
  time-varying networks}}. In \bibinfo{booktitle}{\emph{Proceedings of the 14th
  International Conference on Artificial Intelligence and Statistics}}.
  \bibinfo{pages}{342--350}.
\newblock


\bibitem[\protect\citeauthoryear{Holland, Laskey, and Leinhardt}{Holland
  et~al\mbox{.}}{1983}]%
        {Holland1983}
\bibfield{author}{\bibinfo{person}{Paul~W. Holland},
  \bibinfo{person}{Kathryn~Blackmond Laskey}, {and} \bibinfo{person}{Samuel
  Leinhardt}.} \bibinfo{year}{1983}\natexlab{}.
\newblock \showarticletitle{{Stochastic blockmodels: First steps}}.
\newblock \bibinfo{journal}{\emph{Social Networks}} \bibinfo{volume}{5},
  \bibinfo{number}{2} (\bibinfo{year}{1983}), \bibinfo{pages}{109--137}.
\newblock


\bibitem[\protect\citeauthoryear{Karrer and Newman}{Karrer and Newman}{2011}]%
        {Karrer2011}
\bibfield{author}{\bibinfo{person}{Brian Karrer} {and}
  \bibinfo{person}{M.~E.~J. Newman}.} \bibinfo{year}{2011}\natexlab{}.
\newblock \showarticletitle{{Stochastic blockmodels and community structure in
  networks}}.
\newblock \bibinfo{journal}{\emph{Physical Review E}} \bibinfo{volume}{83},
  \bibinfo{number}{1} (\bibinfo{year}{2011}), \bibinfo{pages}{016107}.
\newblock


\bibitem[\protect\citeauthoryear{Kemp, Tenenbaum, Griffiths, Yamada, and
  Ueda}{Kemp et~al\mbox{.}}{2006}]%
        {Kemp2006}
\bibfield{author}{\bibinfo{person}{Charles Kemp}, \bibinfo{person}{Joshua~B.
  Tenenbaum}, \bibinfo{person}{Thomas~L. Griffiths}, \bibinfo{person}{Takeshi
  Yamada}, {and} \bibinfo{person}{Naonori Ueda}.}
  \bibinfo{year}{2006}\natexlab{}.
\newblock \showarticletitle{{Learning systems of concepts with an infinite
  relational model}}. In \bibinfo{booktitle}{\emph{Proceedings of the 21st
  National Conference on Artificial Intelligence}}. \bibinfo{pages}{381--388}.
\newblock


\bibitem[\protect\citeauthoryear{Laub, Taimre, and Pollett}{Laub
  et~al\mbox{.}}{2015}]%
        {Laub2015}
\bibfield{author}{\bibinfo{person}{Patrick~J. Laub}, \bibinfo{person}{Thomas
  Taimre}, {and} \bibinfo{person}{Philip~K. Pollett}.}
  \bibinfo{year}{2015}\natexlab{}.
\newblock \showarticletitle{{Hawkes processes}}.
\newblock \bibinfo{journal}{\emph{arXiv preprint arXiv:1507.02822}}
  (\bibinfo{year}{2015}).
\newblock
\urldef\tempurl%
\url{http://arxiv.org/abs/1507.02822}
\showURL{%
\tempurl}


\bibitem[\protect\citeauthoryear{Lei and Rinaldo}{Lei and Rinaldo}{2015}]%
        {Lei2015}
\bibfield{author}{\bibinfo{person}{Jing Lei} {and} \bibinfo{person}{Alessandro
  Rinaldo}.} \bibinfo{year}{2015}\natexlab{}.
\newblock \showarticletitle{{Consistency of spectral clustering in stochastic
  block models}}.
\newblock \bibinfo{journal}{\emph{The Annals of Statistics}}
  \bibinfo{volume}{43}, \bibinfo{number}{1} (\bibinfo{year}{2015}),
  \bibinfo{pages}{215--237}.
\newblock


\bibitem[\protect\citeauthoryear{Linderman and Adams}{Linderman and
  Adams}{2014}]%
        {Linderman2014}
\bibfield{author}{\bibinfo{person}{Scott~W. Linderman} {and}
  \bibinfo{person}{Ryan~P. Adams}.} \bibinfo{year}{2014}\natexlab{}.
\newblock \showarticletitle{{Discovering latent network structure in point
  process data}}. In \bibinfo{booktitle}{\emph{Proceedings of the 31st
  International Conference on Machine Learning}}. \bibinfo{pages}{1413--1421}.
\newblock


\bibitem[\protect\citeauthoryear{Linderman and Adams}{Linderman and
  Adams}{2015}]%
        {linderman2015scalable}
\bibfield{author}{\bibinfo{person}{Scott~W. Linderman} {and}
  \bibinfo{person}{Ryan~P. Adams}.} \bibinfo{year}{2015}\natexlab{}.
\newblock \showarticletitle{Scalable {Bayesian} inference for excitatory point
  process networks}.
\newblock \bibinfo{journal}{\emph{arXiv preprint arXiv:1507.03228}}
  (\bibinfo{year}{2015}).
\newblock
\urldef\tempurl%
\url{https://arxiv.org/abs/1507.03228}
\showURL{%
\tempurl}


\bibitem[\protect\citeauthoryear{Masuda, Takaguchi, Sato, and Yano}{Masuda
  et~al\mbox{.}}{2013}]%
        {masuda2013self}
\bibfield{author}{\bibinfo{person}{Naoki Masuda}, \bibinfo{person}{Taro
  Takaguchi}, \bibinfo{person}{Nobuo Sato}, {and} \bibinfo{person}{Kazuo
  Yano}.} \bibinfo{year}{2013}\natexlab{}.
\newblock \showarticletitle{Self-exciting point process modeling of
  conversation event sequences}.
\newblock In \bibinfo{booktitle}{\emph{Temporal Networks}}.
  \bibinfo{publisher}{Springer}, \bibinfo{pages}{245--264}.
\newblock


\bibitem[\protect\citeauthoryear{Matias and Miele}{Matias and Miele}{2017}]%
        {Matias2016}
\bibfield{author}{\bibinfo{person}{Catherine Matias} {and}
  \bibinfo{person}{Vincent Miele}.} \bibinfo{year}{2017}\natexlab{}.
\newblock \showarticletitle{{Statistical clustering of temporal networks
  through a dynamic stochastic block model}}.
\newblock \bibinfo{journal}{\emph{Journal of the Royal Statistical Society:
  Series B (Statistical Methodology)}} \bibinfo{volume}{79},
  \bibinfo{number}{4} (\bibinfo{year}{2017}), \bibinfo{pages}{1119--1141}.
\newblock


\bibitem[\protect\citeauthoryear{Matias, Rebafka, and Villers}{Matias
  et~al\mbox{.}}{2018}]%
        {matias2015semiparametric}
\bibfield{author}{\bibinfo{person}{Catherine Matias}, \bibinfo{person}{Tabea
  Rebafka}, {and} \bibinfo{person}{Fanny Villers}.}
  \bibinfo{year}{2018}\natexlab{}.
\newblock \showarticletitle{{A semiparametric extension of the stochastic block
  model for longitudinal networks}}.
\newblock \bibinfo{journal}{\emph{Biometrika}} \bibinfo{volume}{105},
  \bibinfo{number}{3} (\bibinfo{year}{2018}), \bibinfo{pages}{665--680}.
\newblock


\bibitem[\protect\citeauthoryear{Newman and Girvan}{Newman and Girvan}{2004}]%
        {Newman2004}
\bibfield{author}{\bibinfo{person}{M.~E.~J. Newman} {and} \bibinfo{person}{M.
  Girvan}.} \bibinfo{year}{2004}\natexlab{}.
\newblock \showarticletitle{{Finding and evaluating community structure in
  networks}}.
\newblock \bibinfo{journal}{\emph{Physical Review E}} \bibinfo{volume}{69},
  \bibinfo{number}{2} (\bibinfo{year}{2004}), \bibinfo{pages}{026113}.
\newblock


\bibitem[\protect\citeauthoryear{Ogata}{Ogata}{1981}]%
        {ogata1981}
\bibfield{author}{\bibinfo{person}{Yosihiko Ogata}.}
  \bibinfo{year}{1981}\natexlab{}.
\newblock \showarticletitle{{On Lewis' simulation method for point processes}}.
\newblock \bibinfo{journal}{\emph{IEEE Transactions on Information Theory}}
  \bibinfo{volume}{27}, \bibinfo{number}{1} (\bibinfo{year}{1981}),
  \bibinfo{pages}{23--31}.
\newblock


\bibitem[\protect\citeauthoryear{Ozaki}{Ozaki}{1979}]%
        {ozaki79}
\bibfield{author}{\bibinfo{person}{T. Ozaki}.} \bibinfo{year}{1979}\natexlab{}.
\newblock \showarticletitle{{Maximum likelihood estimation of Hawkes'
  self-exciting point processes}}.
\newblock \bibinfo{journal}{\emph{Annals of the Institute of Statistical
  Mathematics}}  \bibinfo{volume}{31} (\bibinfo{year}{1979}),
  \bibinfo{pages}{145--155}.
\newblock


\bibitem[\protect\citeauthoryear{Qin and Rohe}{Qin and Rohe}{2013}]%
        {qin13}
\bibfield{author}{\bibinfo{person}{Tai Qin} {and} \bibinfo{person}{Karl Rohe}.}
  \bibinfo{year}{2013}\natexlab{}.
\newblock \showarticletitle{{Regularized spectral clustering under the
  degree-corrected stochastic blockmodel}}. In
  \bibinfo{booktitle}{\emph{Advances in Neural Information Processing Systems
  26}}. \bibinfo{pages}{3120--3128}.
\newblock


\bibitem[\protect\citeauthoryear{Rastelli}{Rastelli}{2017}]%
        {rastelli2017exact}
\bibfield{author}{\bibinfo{person}{Riccardo Rastelli}.}
  \bibinfo{year}{2017}\natexlab{}.
\newblock \showarticletitle{Exact integrated completed likelihood maximisation
  in a stochastic block transition model for dynamic networks}.
\newblock \bibinfo{journal}{\emph{arXiv preprint arXiv:1710.03551}}
  (\bibinfo{year}{2017}).
\newblock
\urldef\tempurl%
\url{https://arxiv.org/abs/1710.03551}
\showURL{%
\tempurl}


\bibitem[\protect\citeauthoryear{Rohe, Chatterjee, and Yu}{Rohe
  et~al\mbox{.}}{2011}]%
        {Rohe2011}
\bibfield{author}{\bibinfo{person}{Karl Rohe}, \bibinfo{person}{Sourav
  Chatterjee}, {and} \bibinfo{person}{Bin Yu}.}
  \bibinfo{year}{2011}\natexlab{}.
\newblock \showarticletitle{{Spectral clustering and the high-dimensional
  stochastic blockmodel}}.
\newblock \bibinfo{journal}{\emph{The Annals of Statistics}}
  \bibinfo{volume}{39}, \bibinfo{number}{4} (\bibinfo{year}{2011}),
  \bibinfo{pages}{1878--1915}.
\newblock


\bibitem[\protect\citeauthoryear{Rohe, Qin, and Yu}{Rohe et~al\mbox{.}}{2016}]%
        {rohe2016}
\bibfield{author}{\bibinfo{person}{Karl Rohe}, \bibinfo{person}{Tai Qin}, {and}
  \bibinfo{person}{Bin Yu}.} \bibinfo{year}{2016}\natexlab{}.
\newblock \showarticletitle{{Co-clustering directed graphs to discover
  asymmetries and directional communities}}.
\newblock \bibinfo{journal}{\emph{Proceedings of the National Academy of
  Sciences}} \bibinfo{volume}{113}, \bibinfo{number}{45}
  (\bibinfo{year}{2016}), \bibinfo{pages}{12679--12684}.
\newblock


\bibitem[\protect\citeauthoryear{Sussman, Tang, Fishkind, and Priebe}{Sussman
  et~al\mbox{.}}{2012}]%
        {Sussman2012}
\bibfield{author}{\bibinfo{person}{Daniel~L. Sussman}, \bibinfo{person}{Minh
  Tang}, \bibinfo{person}{Donniell~E. Fishkind}, {and}
  \bibinfo{person}{Carey~E. Priebe}.} \bibinfo{year}{2012}\natexlab{}.
\newblock \showarticletitle{{A consistent adjacency spectral embedding for
  stochastic blockmodel graphs}}.
\newblock \bibinfo{journal}{\emph{J. Amer. Statist. Assoc.}}
  \bibinfo{volume}{107}, \bibinfo{number}{499} (\bibinfo{year}{2012}),
  \bibinfo{pages}{1119--1128}.
\newblock


\bibitem[\protect\citeauthoryear{Tran, Farajtabar, Song, and Zha}{Tran
  et~al\mbox{.}}{2015}]%
        {Tran2015}
\bibfield{author}{\bibinfo{person}{Long Tran}, \bibinfo{person}{Mehrdad
  Farajtabar}, \bibinfo{person}{Le Song}, {and} \bibinfo{person}{Hongyuan
  Zha}.} \bibinfo{year}{2015}\natexlab{}.
\newblock \showarticletitle{{NetCodec: Community detection from individual
  activities}}. In \bibinfo{booktitle}{\emph{Proceedings of the SIAM
  International Conference on Data Mining}}. \bibinfo{pages}{91--99}.
\newblock


\bibitem[\protect\citeauthoryear{Viswanath, Mislove, Cha, and
  Gummadi}{Viswanath et~al\mbox{.}}{2009}]%
        {viswanath2009evolution}
\bibfield{author}{\bibinfo{person}{Bimal Viswanath}, \bibinfo{person}{Alan
  Mislove}, \bibinfo{person}{Meeyoung Cha}, {and} \bibinfo{person}{Krishna~P.
  Gummadi}.} \bibinfo{year}{2009}\natexlab{}.
\newblock \showarticletitle{On the evolution of user interaction in
  {Facebook}}. In \bibinfo{booktitle}{\emph{Proceedings of the 2nd ACM Workshop
  on Online Social Networks}}. ACM, \bibinfo{pages}{37--42}.
\newblock


\bibitem[\protect\citeauthoryear{Xin, Zhu, and Chipman}{Xin
  et~al\mbox{.}}{2017}]%
        {xin2015continuous}
\bibfield{author}{\bibinfo{person}{Lu Xin}, \bibinfo{person}{Mu Zhu}, {and}
  \bibinfo{person}{Hugh Chipman}.} \bibinfo{year}{2017}\natexlab{}.
\newblock \showarticletitle{A continuous-time stochastic block model for
  basketball networks}.
\newblock \bibinfo{journal}{\emph{The Annals of Applied Statistics}}
  \bibinfo{volume}{11}, \bibinfo{number}{2} (\bibinfo{year}{2017}),
  \bibinfo{pages}{553--597}.
\newblock


\bibitem[\protect\citeauthoryear{Xing, Fu, and Song}{Xing
  et~al\mbox{.}}{2010}]%
        {xing10}
\bibfield{author}{\bibinfo{person}{Eric~P. Xing}, \bibinfo{person}{Wenjie Fu},
  {and} \bibinfo{person}{Le Song}.} \bibinfo{year}{2010}\natexlab{}.
\newblock \showarticletitle{{A state-space mixed membership blockmodel for
  dynamic network tomography}}.
\newblock \bibinfo{journal}{\emph{The Annals of Applied Statistics}}
  \bibinfo{volume}{4} (\bibinfo{year}{2010}), \bibinfo{pages}{535--566}.
\newblock


\bibitem[\protect\citeauthoryear{Xu}{Xu}{2015}]%
        {Xu2015}
\bibfield{author}{\bibinfo{person}{Kevin~S. Xu}.}
  \bibinfo{year}{2015}\natexlab{}.
\newblock \showarticletitle{{Stochastic block transition models for dynamic
  networks}}. In \bibinfo{booktitle}{\emph{Proceedings of the 18th
  International Conference on Artificial Intelligence and Statistics}}.
  \bibinfo{pages}{1079--1087}.
\newblock


\bibitem[\protect\citeauthoryear{Xu and {Hero III}}{Xu and {Hero III}}{2014}]%
        {Xu2014a}
\bibfield{author}{\bibinfo{person}{Kevin~S. Xu} {and}
  \bibinfo{person}{Alfred~O. {Hero III}}.} \bibinfo{year}{2014}\natexlab{}.
\newblock \showarticletitle{{Dynamic stochastic blockmodels for time-evolving
  social networks}}.
\newblock \bibinfo{journal}{\emph{IEEE Journal of Selected Topics in Signal
  Processing}} \bibinfo{volume}{8}, \bibinfo{number}{4} (\bibinfo{year}{2014}),
  \bibinfo{pages}{552--562}.
\newblock


\bibitem[\protect\citeauthoryear{Yang, Chi, Zhu, Gong, and Jin}{Yang
  et~al\mbox{.}}{2011}]%
        {Yang2011}
\bibfield{author}{\bibinfo{person}{Tianbao Yang}, \bibinfo{person}{Yun Chi},
  \bibinfo{person}{Shenghuo Zhu}, \bibinfo{person}{Yihong Gong}, {and}
  \bibinfo{person}{Rong Jin}.} \bibinfo{year}{2011}\natexlab{}.
\newblock \showarticletitle{{Detecting communities and their evolutions in
  dynamic social networks---a Bayesian approach}}.
\newblock \bibinfo{journal}{\emph{Machine Learning}} \bibinfo{volume}{82},
  \bibinfo{number}{2} (\bibinfo{year}{2011}), \bibinfo{pages}{157--189}.
\newblock


\bibitem[\protect\citeauthoryear{Zhao, Erdogdu, He, Rajaraman, and
  Leskovec}{Zhao et~al\mbox{.}}{2015}]%
        {zhao2015seismic}
\bibfield{author}{\bibinfo{person}{Qingyuan Zhao}, \bibinfo{person}{Murat~A.
  Erdogdu}, \bibinfo{person}{Hera~Y. He}, \bibinfo{person}{Anand Rajaraman},
  {and} \bibinfo{person}{Jure Leskovec}.} \bibinfo{year}{2015}\natexlab{}.
\newblock \showarticletitle{{SEISMIC}: A self-exciting point process model for
  predicting tweet popularity}. In \bibinfo{booktitle}{\emph{Proceedings of the
  21th ACM SIGKDD International Conference on Knowledge Discovery and Data
  Mining}}. \bibinfo{pages}{1513--1522}.
\newblock


\bibitem[\protect\citeauthoryear{Zhao, Levina, and Zhu}{Zhao
  et~al\mbox{.}}{2012}]%
        {Zhao2012}
\bibfield{author}{\bibinfo{person}{Yunpeng Zhao}, \bibinfo{person}{Elizaveta
  Levina}, {and} \bibinfo{person}{Ji Zhu}.} \bibinfo{year}{2012}\natexlab{}.
\newblock \showarticletitle{{Consistency of community detection in networks
  under degree-corrected stochastic block models}}.
\newblock \bibinfo{journal}{\emph{The Annals of Statistics}}
  \bibinfo{volume}{40}, \bibinfo{number}{4} (\bibinfo{year}{2012}),
  \bibinfo{pages}{2266--2292}.
\newblock


\end{thebibliography}

\end{document}